\documentclass[format=acmsmall, review=false]{acmart}
\usepackage{acm-ec-25}
\usepackage{booktabs} 

\setcitestyle{acmnumeric}


\usepackage{caption}
\usepackage{subcaption}
\usepackage[utf8]{inputenc} 
\usepackage{booktabs}
\usepackage{enumitem}
\usepackage{xcolor}
\usepackage{bbm}
\usepackage{algorithm}
\usepackage{algpseudocodex}
\usepackage[hyperref]{}
\usepackage{makecell}
\usepackage{multirow}
\usepackage{amsthm}
\usepackage{paralist}
\usepackage{bm}
\usepackage{cleveref}
\usepackage{graphicx}
\usepackage{subcaption} 
\usepackage{url} 
\usepackage{tabularx}
\usepackage[symbol]{footmisc}

\newtheorem{theorem}{Theorem}
\theoremstyle{definition}
\newtheorem{example}{Example}[section]

\newtheorem{proposition}{Proposition}

\newcommand{\mBeta}{\texttt{mBeta}\xspace}
\newcommand{\Beta}{\texttt{Beta}}

\newcommand{\Dir}{\texttt{Dir}}
\newcommand{\cov}{\texttt{cov}}

\usepackage{xspace}

\newcommand{\USW}{\texttt{USW}\xspace}
\newcommand{\NSW}{\texttt{NSW}\xspace}
\newcommand{\ENVY}{\texttt{Envy}\xspace}
\newcommand{\MMS}{\texttt{MMS}\xspace}

\newcommand{\N}{\mathbb{N}}
\newcommand{\R}{\mathbb{R}}
\newcommand{\Ind}{\mathbbm{1}}
\newcommand{\E}{\mathbb{E}}

\newcommand{\UMass}{University of Massachusetts Amherst}
\title{Deploying Fair and Efficient Course Allocation Mechanisms}
\date{}
\author{George Bissias}
\affiliation{
  \institution{University of Massachusetts Amherst}
  \city{}
  \country{USA}
}
\author{Cyrus Cousins}
\affiliation{%
  \institution{Duke University}
  \country{USA}
}
\authornote{Work conducted while author was a postdoctoral research fellow at the University of Massachusetts, supported by a CDS Fellowship}
\author{Paula Navarrete D{\'i}az} 
\affiliation{%
  \institution{University of Massachusetts Amherst}
  \country{USA}
}
\authornote{Lead Author}
\author{Yair Zick}
\affiliation{%
  \institution{University of Massachusetts Amherst}
  \city{}
  \country{USA}
}
\begin{abstract}
    Universities regularly face the challenging task of assigning classes to thousands of students while considering their preferences, along with course schedules and capacities. 
    Ensuring the effectiveness and fairness of course allocation mechanisms is crucial to guaranteeing student satisfaction and optimizing resource utilization. 
    We approach this problem from an economic perspective, using formal justice criteria to evaluate different algorithmic frameworks. 
    To evaluate our frameworks, we conduct a large scale survey of university students at \UMass, collecting over 1000 student preferences. This is, to our knowledge, the largest publicly available dataset of student preferences.  
    We develop software for generating synthetic student preferences over courses, and implement four allocation algorithms: the serial dictatorship algorithm used by \UMass; Round Robin; an Integer Linear Program; and the Yankee Swap algorithm. 
    We propose improvements to the Yankee Swap framework to handle scenarios with item multiplicities. 
    Through experimentation with the Fall 2024 Computer Science course schedule at \UMass, we evaluate each algorithm's performance relative to standard justice criteria, providing insights into fair course allocation in large university settings.
\end{abstract}
\begin{document}
\begin{titlepage}

\maketitle


\end{titlepage}
\section{Introduction}
Public universities regularly run large-scale matching markets: enrolling students to classes.
There are over 15 million undergraduate students in the United States alone \cite{nces2023ugradenrollment}, with some universities boasting cohorts numbering in the tens of thousands \cite[Table 312.10]{nces2023ugradenrollment}.   

The sheer scale of the course allocation problem requires the use of automated assignment mechanisms, which typically 
\begin{inparaenum}[(a)]
    \item collect student preferences and 
    \item assign students to classes based on their eligibility/preferences/priority. 
\end{inparaenum}
Ideally, course allocation mechanisms should be fast, effective and satisfy certain objectives. 
One reasonable objective is \emph{efficiency}: ensuring that classes are assigned to students who actually want and are able to take them; another is \emph{fairness}: ensuring that class seats are fairly distributed among students. 
Other constraints must be accounted for as well: student schedules should be \emph{feasible}, i.e., the mechanism does not cause a scheduling conflict; another is \emph{student priority}: course allocation mechanisms should prioritize students with greater needs. 
Properly designed, course allocation mechanisms guarantee student satisfaction, facilitate timely graduation, and optimize the utilization of public resources. 

The course allocation problem can be naturally modeled as a problem of \emph{fair allocation of indivisible goods} \cite{brandt2016handbookfairdiv}: students are \emph{agents} who exhibit preferences over \emph{items} --- seats in classes. 
Our goal is to design a \emph{mechanism} that assigns items to agents while satisfying certain design criteria. 
\emph{Prima facie}, the course allocation problem is well-positioned to be central in the fair allocation community: 
it is a regularly occurring market involving thousands of agents with preferences over thousands of items (unlike smaller domains, e.g., those offered by the Spliddit platform \cite{goldman2015spliddit}); moreover, it offers a rich landscape of constraints and modeling challenges: modeling student preferences, accounting for scheduling clashes, and determining acceptable justice criteria. 
\emph{The lack of public datasets} is one major obstacle to the proper scientific analysis of course allocation mechanisms. 
Recent works, e.g., the Course Match mechanism,  \cite{budish2011ef1,budish2017coursematch,budish2023courseallocation,soumalias2024courseallocation} offer empirical analysis of course allocation instances; however, these works do not publicly release their datasets, nor are their proposed allocation mechanisms publicly available. 
This makes it difficult to independently verify the efficacy of their proposed approach.

In this work, we take a first step towards creating a \emph{publicly available}, \emph{large-scale} ecosystem for the analysis of the course allocation problem.

\subsection{Our Contribution}
We implement several algorithmic frameworks for large-scale fair allocation, and test them on preference data collected from students at \UMass. The data and algorithmic frameworks are all in public repositories (data: \url{https://github.com/Fair-and-Explainable-Decision-Making/course-allocation-data}; frameworks: \url{https://github.com/Fair-and-Explainable-Decision-Making/yankee-swap-allocation-framework}.
In more detail, our contributions include:  
\begin{description}[leftmargin= 0cm]
    \item[Data collection:] we collect preference data from 1061 students in the computer science department of \UMass (see \Cref{sec:data} for details). In the fair allocation domain, this dataset represents one of the largest sources of publicly available preference data.
    \item[Software for Large-Scale Fair Allocation:]  
    we implement four allocation mechanisms (\Cref{sec:allocation_algs}): Round Robin (a classical scheduling algorithm, also known as the draft mechanism \cite{budish2012roundrobincourseallocation}), a version of Serial Dictatorship (the algorithm used at \UMass), an Integer Linear Program that finds the solution maximizing utilitarian social welfare, and Yankee Swap \cite{viswanathan2023general}, a recently introduced fair allocation mechanism. 
    We discuss practical considerations in the development of the mechanisms, as well as methods to utilize structural properties of the course allocation problem which significantly improve the runtime of the Yankee Swap framework (\Cref{sec:yswithmult}). 
We test each algorithm against the Fall 2024 \UMass Computer Science course schedule, and measure its performance relative to a number of standard \emph{justice criteria} (\Cref{sec:experiments}). 
    \item[Synthetic Student Preference Generation:] to test our algorithmic frameworks under different regimes of course supply and demand, we introduce a synthetic student generator which creates new students based on the collected preference data. The student preference generator creates student profiles that, while distinct from the original data, follow a similar preference distribution (\Cref{sec:synth_students}). 
\end{description}

\subsection{Related work}
The course allocation problem is well-studied in combinatorial optimization, with various mechanisms proposed to balance efficiency, fairness, and strategic considerations. 
\citet{hatfield2009serialdictatorship} shows that serial dictatorships are the only Pareto efficient, strategyproof and non-bossy mechanisms; \citeauthor{hatfield2009serialdictatorship} focuses on additive preferences, whereas we study more general submodular utilities. 
\citet{budish2012roundrobincourseallocation} examine the Draft (Round-Robin) mechanism for course allocation. 
\citeauthor{budish2012roundrobincourseallocation} show that the Draft mechanism retains high efficiency guarantees in practice, despite students misreporting their preferences. \citeauthor{budish2012roundrobincourseallocation} also collect data on a similar scale to ours (900 students at Harvard Business School), with students ranking classes on a Likert scale. 
The bidding point mechanism is widely used for course allocation \cite{sonmez2010course}: students distribute a fixed number of points among their preferred courses. 
Bids are then processed in decreasing order and honored if the course and the student's schedule are not at capacity. This approach focuses solely on efficiency, neglecting equity, and is highly susceptible to strategic manipulation, as students may misreport their preferences to get better outcomes \cite{atef2020optimization}. 
Furthermore, since bids serve as proxies for preferences, the inferred preferences may not accurately reflect true student demand, leading to market distortions \cite{sonmez2010course, krishna2008research}. 

\citet{sonmez2010course} propose a Gale-Shapley Pareto-dominant market mechanism that separates the dual role of bids by assigning courses through a matching mechanism considering students preferences, while using bids solely to break ties. In any case, inefficiencies arise from students overbidding on courses they could have gotten with fewer points. 
\citet{atef2020optimization} introduce multi-round algorithms, based on matching and second-price concepts, which addresses these inefficiencies and show promising empirical results. 

\citet{diebold2014course} study course allocation as a two sided matching problem, where courses have preferences over students, induced by student priority. 
They analyze the Gale-Shapley Student-Optimal Stable mechanism --- a modified version of the Gale-Shapley deferred acceptance algorithm \cite{galeshapley} --- and the Efficiency Adjusted Deferred Acceptance mechanism \cite{kesten2010schoolchoice}, which reduces the welfare losses produced by the deferred acceptance algorithm, but is not strategyproof. 
Similar to other approaches \cite{nogareda2017optimizing, romero2024strategic}, these fail to incorporate conflicts between classes. 

\citet{budish2011ef1} introduces the Approximate Competitive Equilibrium from Equal Incomes (A-CEEI) mechanism, which is strategy-proof at large, and approximates envy-freeness and maximin share fairness, both in theory and empirically (see also \cite{othman2010courseallocation}). 
In the context of course allocation, this mechanism faces some challenges. 
A-CEEI assumes that all students have approximately equal budgets and, therefore, equal priority. Due to academic performance, seniority, or graduation requirements, some students may require higher priority when enrolling in courses. \citet{kornbluth2021undergraduate} propose the Pseudo-Market with Priorities mechanism, which incorporates competitive equilibrium with fake money while accounting for student priorities. 
More importantly, A-CEEI approximation errors can result in capacity constraint violations. In addition, early implementations faced scalability issues \cite{budish2017coursematch, atef2020optimization}.

Course Match \cite{budish2017coursematch}, an A-CEEI-based approach, addresses these issues by introducing two additional stages to ensure feasibility, together with an interface for students to report utilities, and a software system capable of handling reasonably sized instances. 
Course Match relies on students reporting their complete ordinal preferences over all possible course bundles --- an impractical requirement. Instead, students only rate individual courses and course pairs. 
At the same time, Course Match does not guarantee optimal outcomes if students misreport or fail to articulate their true preferences. To mitigate this issue, \citet{soumalias2024courseallocation} propose a machine-learning-powered version of Course Match to reduce reporting errors by iteratively asking student to rate personalized pairwise comparison queries. 

Despite these efforts, Course Match is not guaranteed to find a price vector and corresponding allocation that respects course capacities, and may not run in reasonable time for large instances \cite{budish2017coursematch}. 
Furthermore, Course Match is a commercial product. 
To our knowledge, there is currently no publicly available implementation of Course Match, nor is there large-scale empirical data to evaluate its effectiveness in real-world settings. 
Lastly, Course Match places a significant elicitation burden on students. 
Students are encouraged to study a 12-page manual in order to understand the mechanism and enroll in courses \cite{coursematchmanual2020}. 
While this may be reasonable for MBA students or those in advanced economics-related programs, it may not be suitable for undergraduates from diverse majors and academic backgrounds at a large public university. 

\citet{biswas2023algorithmic} analyze the complexity of finding fair and efficient course allocations under course conflicts (see also \cite{biswas2024fair}). They establish that the problem is computationally intractable, unless the number of agents is small. At \UMass, course schedules are very standardized, which makes course schedule conflicts well-structured and tractable. 

Fair allocation algorithms and solution concepts have been extensively studied in recent years, with a particular focus on leveraging the structure of agent preferences (see survey by \citet{aziz2022fairallocationsurvey}). 
Recent works focus on settings where agents have \emph{binary submodular} utilities, i.e., items have a marginal utility of either $0$ or $1$, and agents exhibit diminishing marginal returns on items as their bundles grow \cite{babaioff2021EF,barman2021matroid,garg2020mms,benabbou2021MRF,viswanathan2023yankeeswap,viswanathan2023general}.
Following these works, we implement the Yankee Swap algorithm \cite{viswanathan2023yankeeswap} as one of our benchmarks. 

The Yankee Swap algorithm outputs a \emph{Lorenz dominating} allocation, which is guaranteed to satisfy several theoretical guarantees \cite{babaioff2021EF,benabbou2021MRF}. In addition to maximizing the minimal utility of any agent (which is implied by the leximin property), it is approximately envy-free, maximizes utilitarian welfare, and offers each agent at least half of their maximin share guarantee. Furthermore, Yankee Swap is a strategyproof mechanism, and can encode student priorities \cite{viswanathan2023general}. Our empirical evaluation confirms the effectiveness of the Yankee Swap framework, providing evidence for its potential efficacy as a practical course allocation mechanism.

\section{Course Preference Data}
\label{sec:data}
We conducted a survey to gather student preferences for courses offered by the Computer Science department at \UMass during the Fall 2024 semester. We reached out to students in May 2024, shortly after the final examination period but before grades were released. 
This timing was selected for two reasons: first, while classes were over, most students were still on campus. Secondly, the survey was released shortly after students indicated their course preferences for the Fall 2024 on the \UMass portal. 
Thus, the survey timing ensured that most students were available --- they were not overly busy with classwork or studying for exams, nor were they off-campus and enjoying their summer break --- and had time to consider their preferences before answering our survey --- they had recently finalized their Fall 2024 schedule.  
The survey achieved a significant effective response rate of 30.33\%, with a relatively balanced response rate across students of different academic status (see \Cref{table:student_quantities} and discussion). 

We first describe the survey and data collection process, followed by our methodology for generating synthetic students.
\subsection{Student Preference Survey}
\label{sec:survey}
The Fall 2024 course schedule for the Computer Science department includes 96 distinct course sections. 
Each section is identified by a combination of attributes such as catalog number, section number, course name, instructor, capacity, and schedule. 
The catalog number indicates the course level: numbers ranging from 100 to 400 are undergraduate courses aimed at students at different stages of their undergraduate studies; numbers in the 500 to 600 range are graduate-level courses, primarily aimed at MS and PhD students, although they are sometimes open to senior undergraduate students. Multiple sections may offer the same catalog number. However, the combination of catalog number and section number is unique, and is referred to as a \emph{course} from now on. 

We invited all undergraduate and graduate students in the Computer Science department to participate in the survey. Participation was voluntary, and participants received a \$10 Amazon gift card. 
The survey was hosted on Qualtrics, an online survey platform, and consisted of the following questions:
\begin{description}[leftmargin=0cm]
\item[Current Status:] current academic status refers to the student's current year in their program, whether they are an undergraduate (freshman, sophomore, junior, or senior) or a graduate student (enrolled in a master's or Ph.D. program). We used this information to tailor the courses presented in surveys to reduce cognitive load. 
For example, graduate students were not asked to rank undergraduate-level courses, and first-year students were not asked to rank graduate-level courses. 
More importantly, student status determines students' priority when enrolling in classes. 
PhD students are given priority over MS students, who are given priority over seniors, and so on. 
\item[Desired course load:] we asked students how many courses they wish to enroll in before and after the add/drop period. Since students are often uncertain about their final preferences, many initially over-enroll to explore their options before finalizing their schedules. Therefore, these quantities might not match. 
\item[Preferred course categories:] students indicated areas of interest: Theory, Systems, AI/ML, and Other. This information was collected to filter out courses that fall outside a student’s stated interests, reducing their cognitive load when ranking classes. 
Additionally, understanding students' preferred categories offers valuable insights into their academic focus, which can further inform course allocation decisions and guide future curriculum planning.
\item[Undesired time slots:] students indicated if there were any particular time slots they were not able to take classes in. 
This information can be used to ensure that students are not enrolled in courses scheduled at times they are unable to attend. When testing our methods, we do not utilize this information in order to obtain a richer (and more competitive) preference landscape, but one can simply encode every class that conflicts with the individual student schedule as an undesirable class.
\item[Preference Over Courses:] after the initial screening questions, students were shown a tailored list of courses based on their academic status and preferred course categories. 
We presented students with the following key details --- course name, catalog number, section, instructor, day and time --- in order to help students make informed decisions about their preferences. 
For each course listed, students rated their interest on a scale from 1 (not interested) to 7 (very interested), with an additional option to select 8 (required) if the course was a requirement for their program. All courses were rated 1 by default. 
\end{description}
We used the initial questions to reduce the cognitive load on students: we did not collect student preferences on any courses that were inappropriate for their academic status or did not match their area of interest. 
As a result, the number of courses students were requested to rank was significantly reduced.
\begin{table}
    \centering
    \small
    \begin{tabularx}{\textwidth}{cXXXXXX}
    \toprule
         Academic Status& Freshmen &Sophomore & Junior & Senior & MS & PhD\\
         \toprule
         Avg. \# of presented classes& 29 & 67 & 71 & 82 & 26 & 27\\
         \bottomrule
    \end{tabularx}
    \caption{Average number of classes students were asked to rank after preliminary filtering, by academic status.}
    \label{tab:avg-num-courses-ranked}
\end{table}
As can be seen in \Cref{tab:avg-num-courses-ranked}, Freshmen and MS/PhD students had the greatest reduction in the number of classes they observed, while juniors and seniors were still presented with a large number of classes. 
This is mostly due to the fact that juniors and seniors can take classes at all levels, whereas graduate students and freshmen are offered a more limited selection. 

We did not allow students to express preferences over bundles of courses. 
This is a departure from other preference elicitation models in the course allocation domain, e.g., \cite{budish2017coursematch,soumalias2024courseallocation}. 
We elicited information on individual courses since it allows students to express preferences over a large number of courses while avoiding significant cognitive overload. 
In addition, our elicitation protocol maps well to constraint based utility models: we can naturally map the results of our survey to \emph{submodular} (or nearly submodular) student preference functions. 
Having a complete picture of student preferences over individual courses also allows us to naturally simulate additional student profiles (as described in \Cref{sec:synth_students}). 

The total number of survey responses, the actual number of students (provided by the department), and the corresponding response rates --- both overall and broken down by academic status --- are summarized in \Cref{table:student_quantities}. 
256 respondents did not specify their academic status, and as a result, we did not use their recorded rankings in our analysis. 
109 respondents indicated their status but submitted empty preferences (those students were not paid). 
We utilized 700 effective responses with both academic status and non-empty preferences, achieving an effective response rate of 30.33\%. 
\Cref{table:student_quantities} provides a detailed breakdown of these values by academic status.

\begin{table}
\centering
\begin{tabularx}{0.9\textwidth}{cccccc} 
\toprule
 Academic  & Number of  & Number of & Response & Effective  & Effective Response \\
  Status &  Students &  Responses & Rate (\%) & responses &   Rate (\%) \\
\toprule
Freshmen & 239 & 156 & 65.27 &125 & 53.30 \\
\midrule
Sophomore & 327 & 134 & 40.98 &113 & 34.56 \\
\midrule
Junior & 408 & 143 & 35.05  &126 & 30.88\\
\midrule
Senior & 573 & 135 & 23.56 &117 & 20.42\\
\midrule
MS & 613 & 190 & 31.00 &172 & 28.06 \\
\midrule
PhD  & 148 & 51 & 34.46 & 47 & 31.76\\
\midrule
Unspecified & N/A & 256 & N/A & 0 & N/A\\
\bottomrule
Total & 2308 & 1065 & 46.14 & 700 & 30.33\\
\bottomrule
\end{tabularx}
\caption{Number of students, number of responses, and response rates, both overall and broken down by academic status. Effective response rates reflect responses with non-empty preferences and specified academic status.}
\label{table:student_quantities}
\end{table}
\subsection{Synthetic Students}
\label{sec:synth_students}
Let us next describe our methodology for generating synthetic student preference profiles.
We let $S = \{1,\dots,s\}$ be the set of respondents and $G = \{g_1,\dots,g_m\}$ be the set of courses. 
For each course $g \in G$, respondent $i\in S$ provides a response in the range $\{1, \ldots, 8\}$ (where $8$ corresponds to a required class). 
We normalize these responses to a preference vector for respondent $i$, $\vec \theta^i \in [0,1]^m$. 
Fundamentally, we assume that each respondent will randomly formulate binary preferences over the set of classes such that the normalized value $\theta^i_g \in [0,1]$ represents the probability that respondent $i$ wants to take the class $g$.


Using the preference vector $\vec \theta^i$, we define an inference procedure that produces a distribution over preference vectors that are likely similar, but not equal, to $\vec \theta^i$. 
The process begins by sampling binary vectors $\ell$ times according to probabilities $\vec \theta^i$, which form a data matrix $\mathcal{D}^i \in \{0,1\}^{\ell \times m}$; if we take the limit $\ell \rightarrow \infty$, $\mathcal{D}^i$ uniquely describes $\vec \theta^i$.
Using the matrix $\mathcal{D}^i$, we infer a \emph{multivariate beta} posterior distribution, $\mBeta(\vec\gamma^i)$, as defined in \citet{westphal2019simultaneous}. 
The vector $\vec \gamma^i \in [0,1]^{2^m}$ is an implicit parameter that captures dependencies among course preferences. 
In order to avoid combinatorial blowup, the inference process is adjusted so that these dependencies are limited to covariances.  
We refer to the random vector $\vec \vartheta^i \sim \mBeta(\vec\gamma^i)$ as the $i$-th \emph{random student} and the $j$-th realization $\vec \sigma^{ij} 
\in [0,1]^m$, drawn from $\mBeta(\vec \gamma^i)$, as a \emph{synthetic student}. As $\ell$ increases, each synthetic student $\vec \sigma^{ij}$ will differ less and less from $\vec \theta^i$, the original preference vector for the respondent.
We defer further details of the inference process to Appendix~\ref{sec:synth_details}.




We model the population of students for each academic status (e.g. freshmen) using \emph{kernel density estimation} (KDE). 
This distribution is constructed as a uniform mixture of the distributions $\mBeta(\vec \gamma^1), \ldots, \mBeta(\vec \gamma^s)$, and may be efficiently sampled  by drawing one of the kernels $\mBeta(\vec \gamma^i)$ uniformly at random, and then sampling from it. 
Because the number of rows $\ell$ in $\mathcal{D}^i$ determines the number of samples drawn from the marginal Bernoulli distribution $\texttt{Bern}(\theta^i_g)$ for each class $g$, $\ell$ can be seen as a smoothing parameter in much the same way that \emph{bandwidth} controls kernel smoothing in conventional KDE.  

Survey respondents also specify a \emph{course max}, or the maximum number of courses they are interested in taking. For each academic status, we model the course max preferences independently from course preferences. We first fit a multinomial distribution to the set of course max preferences among respondents for that status. Then, for each synthetic student $\vec \sigma^{ij}$ drawn according to $\vec \vartheta^i$, we independently draw a course max preference from the multinomial distribution.

\section{Course Assignment as a Fair Allocation Problem}
\label{sec:courseallocation}
We start by presenting the fair allocation of indivisible goods framework. 
Our model offers a slight departure from the standard problem formulation \cite{brandt2016handbookfairdiv}, since we allow item \emph{types}. 
Let $\N_0$ be the set of non-negative integers.
Let $N=\{1,2,...,n\}$ be a set of \emph{agents} and $G=\{g_1,g_2,...,g_m\}$ be a set of \emph{item types}. 
Each item type $g\in G$ has a limited number of identical copies $q(g)$; 
$\vec q=(q(g_1),...,q(g_m))$ is the vector describing the quantities of each item. 
In the context of course allocation, agents are students, item types are courses, and $q(g)$ is the number of available seats in course $g$. 

The classic fair allocation literature assigns agents \emph{bundles} of items, i.e., subsets of the set $G$. 
Since our framework allows items to have multiple copies, a bundle $S$ is a vector in $\N_0^m$, where $S_g$ is the number of copies of item $g$ that is in the bundle $S$. 
One can simply treat multiple copies of an item as distinct items and do away with the duplicate items model. However, the notion of item copies is useful for proving better upper bounds on the runtime of our proposed allocation mechanisms.
From a modeling perspective, treating individual course seats as distinct items makes running even somewhat simple allocation mechanisms like Round Robin impractical: instead of having $\sim 100$ item types, we need to store information about $\sim 10k$ individual items. 
An \emph{allocation} $X=(X_0, X_1, \dots, X_n)$ is a partition of item copies; each agent $i\in N$ is assigned a bundle $X_i$, and $X_0$ denotes the set of unassigned copies.  
The allocation $X$ is represented as a matrix, where $X_i$ is a vector in $\N_0^m$ that represents agent $i$'s bundle under allocation $X$. Since agents may own multiple copies of an item, $X_{i,g}$ is the number of copies of item $g$ in agent $i$'s bundle; 
however, in the course allocation domain students do not receive more than one seat in any class. 
An allocation is \emph{valid} if for any item $g\in G$, $X_{0,g}+ \sum_{i=1}^n X_{i,g} = q(g)$: the number of unassigned copies of $g$, $X_{0,g}$, plus the number of copies assigned to all agents equals exactly the number of copies $q(g)$. 
For ease of readability, for an allocation $X$ and an item type $g$, we say that $g\in X_i$ if $X_{i,g}>0$. 
Let $\Ind_g \in \{0,1\}^m$ be the indicator vector of item $g$, i.e., it is equal to $1$ in the $g$-th coordinate and is $0$ elsewhere. Given a bundle of items $S$, we write $S+g$ and $S-g$ to refer to $S+\Ind_g$ and $S-\Ind_g$, i.e., $S$ with an additional copy of the item $g$, or with one copy of $g$ removed.

Each agent $i \in N$ has a \emph{valuation function} $v_i:\N_0^m\to \N_0$ which depends only on the bundle $X_i$ allocated to $i$. 
We define the \emph{marginal utility} of agent $i$ from receiving an additional copy of the item type $g$ as 
$$\Delta_i(X_i, g)\triangleq v_i(X_i +g)-v_i(X_i ).$$

We say that $v_i$ is a \emph{binary} valuation if for any bundle $S\in \N_0^m$ and any item type $g$, $\Delta_i(S, g)\in \{0,1\}$.

Given two bundles $S,T \in \N_0^m$, we write $S\preceq T$ if for all $g\in G$, $S_g\leq T_g$; this is equivalent to stating that the bundle $S$ is a subset of $T$ in the standard fair allocation model; we write $S \prec T$ if any of these inequalities is strict. 
We say that $v_i$ is a \emph{submodular} function if for any two bundles $S,T \in \N_0^m$ such that $S\preceq T$, and any item type $g$, $\Delta_i(S, g)\geq \Delta_i(T, g)$. 
Intuitively, the more items an agent $i$ owns, the less marginal benefit they receive from additional items. 

Finally, a bundle $S\in \N_0^m$ is \emph{clean} \citep{benabbou2021MRF} with respect to agent $i \in N$ if for any bundle $T\prec S$ we have $v_i(S)> v_i(T)$, i.e., removing any item from $i$'s bundle strictly reduces their utility. 

\subsection{Encoding Student Valuations}
\label{subsec:valuation_functions}
While our framework applies to general valuation settings, in our empirical evaluations we model students as having binary preferences. 
Binary preferences are extensively studied in the fair allocation literature, see e.g. \cite{babaioff2021EF,barman2018pathtransfers,barman2021matroid,benabbou2021MRF,benabbou2019group,halpern2020fairness,viswanathan2023yankeeswap,viswanathan2023general}. 
Thus, we have a good understanding of their properties and the types of guarantees we can offer; furthermore, there exist algorithmic frameworks that compute fair and efficient allocations for agents with binary submodular valuations. 
We chose to elicit numerical preferences from our survey participants in order to offer a more refined view of student preferences, and to support future empirical analysis of algorithmic frameworks beyond those that handle binary preferences. 
In other words, we see the \emph{data collection} and the \emph{algorithmic frameworks/preference encoding} portions of our work as complementary yet distinct contributions to the empirical evaluation of fair allocation mechanisms. 

Allocation mechanisms in the literature typically rely on oracle access to agent valuations. However, pre-computing and storing agent valuations in a fixed data table is intractable when dealing with thousands of students and hundreds of classes; 
what's more, the allocation mechanisms we utilize do not need to know agents' valuations for all possible bundles. 
To address this challenge, we encode student preferences via \emph{linear inequality constraints}. 
We can encode several types of natural course constraints via linear inequalities. For example, linear constraints can encode course scheduling conflicts, maximum enrollment caps for a student, substitutions (wanting to take course A or course B but not both), and taking only classes which the students approve.

We represent the preferences of a student $i$ by $r_i$ linear constraints through a $r_i \times m$ matrix $Z^i$ and a limit vector $\vec {b}^i$. A bundle $X_i$ is \emph{feasible}, i.e., it is clean with respect to agent $i$, if and only if
\begin{equation}
\label{eq:lin_ineq_cons}
    Z^i X_i \leq \vec{b}^i.
\end{equation}
These constraints can be computed as needed at a relatively low cost, offering a more manageable approach to modeling student preferences. 

Finally, the utility that agent $i$ obtains from a bundle $X_i$ is given by
\begin{equation}
\label{eq:lin_ineq_val}
    v_i(X_i) = \max \{ | S|: Z^i  S \leq \vec b^i, S \preceq X_i \},
\end{equation}
where  $|S|$ denotes the 1-norm of $S \in \N_0^m$. In other words, the utility an agent obtains from a bundle $X_i$ is the size of the largest feasible sub-bundle of $X_i$.

It is relatively straightforward to show that student preferences encoded by \Cref{eq:lin_ineq_val} are binary: adding a single item copy can increase the cardinality of the feasible set by at most one. 

The Yankee Swap framework \cite{viswanathan2023general}, which we utilize in our empirical evaluation, offers several strong guarantees when agents have binary submodular valuations. While our encoding ensures that student valuations are binary, they are not necessarily submodular. For example, assume that a student $i$ likes classes $A$, $B$ and $C$. 
If class $A$ conflicts with $B$, $B$ conflicts with $C$, and $A$ does not conflict with $C$, then the marginal gain of giving $C$ to $i$ when they have $B$ is $0$; however, if they have both $A$ and $B$, then the marginal gain of $C$ is 1. 
In this scenario, items do not exhibit \emph{substitutability} \cite{paesleme2017gross}, a key property of submodular preferences. 


In our data, student scheduling constraints largely retain submodularity. This is because the vast majority of classes are scheduled at a set of predetermined times, e.g., Tues/Thu at 1pm-2:15pm. Thus, if class $A$ conflicts with $B$ and $B$ conflicts with $C$, then $A$ conflicts with $C$, avoiding issues similar to those described above.   

\subsection{Justice Criteria}
\label{sec:justice_criteria}
We evaluate the performance of course allocation mechanisms according to \emph{fairness} and \emph{social welfare} criteria. 
We consider three efficiency criteria. 
The \emph{Utilitarian Social Welfare} (USW) sums the total welfare of agents:
$\USW(X)=\frac{1}{n}\sum_{i\in N}v_i(X_i)$.
The USW criterion does not account for potential welfare disparities. For example, from the USW perspective, a course assignment that offers six courses to one student and none to another is equivalent to one that assigns three courses each. 

The \emph{Nash welfare} \citep[Chapter 3]{moulin2003fairdivision} strikes a balance between utilitarian and egalitarian approaches.
An allocation that maximizes the Nash welfare first minimizes the number of agents with zero utility. Subject to that, it maximizes the \emph{product} of agent utilities, i.e. the value $\NSW(X) = \prod_{i \in N_{>0}(X)} v_i(X_i)$, where $N_{>0}(X)$ is the set of agents with positive utility under $X$. 

We also evaluate allocations based on \emph{fairness} metrics.
Given an allocation $X$, we say that agent $i$ \emph{envies} agent $j$ \citep{foley1967envy} if $v_i(X_i) < v_i(X_j)$. An allocation is \emph{envy-free} (EF) if no agent envies another. 
Envy-free allocations are not guaranteed to exist (consider a setting with two agents and one single item); this gave rise to a the canonical notion of \emph{envy-freeness up to one item} (EF-1) \citep{budish2011ef1,lipton2004ef1} and \emph{envy-freeness up to any item} (EF-X) \citep{caragiannis2019unreasonable,plaut2020efx}. 
An allocation $X$ is EF-1 if for any two agents $i$ and $j$, if $i$ envies $j$, then there exists some item $g\in X_j$ such that $v_i(X_i) \ge v_i(X_j-\Ind_g)$: removing some item from agent $j$'s bundle eliminates agent $i$'s envy. 
An allocation is EF-X if removing \emph{any} item from $j$'s bundle eliminates agent $i$'s envy towards $j$. 

We also wish to examine \emph{share-based} fairness concepts, e.g., the maximin share \cite{budish2011ef1}. 
The maximin share guarantee of agent $i$ is computed by letting agent $i$ partition the items into $n$ bundles, and receiving the worst one. 
More formally, $\MMS_i  = \max_{X}\min_j v_i(X_j)$. 
While the maximin share is a well-established and well-studied share-based concept (see, e.g.,\cite{amanatidis2017approximation,barman2019fair,barman2020mms,budish2011ef1,caragiannis2019unreasonable,garg2020mms,kurokawa2018fairenough}), it is inappropriate in our setting. 
The reason is simple: most students rate the vast majority of classes at $1$, i.e., they do not want to take them. 
Thus, any partition of item copies to $n$ bundles will result in at least one bundle whose value is $0$. 
This implies that the maximin share of the vast majority of students is $0$. 
Since we are interested in analyzing some share-based fairness criterion, we use the \emph{pairwise maximin share} \cite{caragiannis2019unreasonable}. 
An allocation $X$ is \emph{Pairwise Maximin Share Fair} (PMMS-fair) if for any pair of agents $i$ and $j$, agent $i$'s valuation of their own bundle is at least the valuation of the worst bundle they could get in a two way division of $X_i \cup X_j$.

\subsection{Allocation Mechanisms}\label{sec:allocation_algs}
An allocation algorithm takes as input a set $G$ of $m$ courses with $\vec q =(q(g_1),...,q(g_m))$ seats per course, and a set $N$ of $n$ students with valuations $(v_1,\dots,v_n)$. 
Its output is some feasible allocation $X$ of the course seats. 
We consider four different allocation algorithms: Serial Dictatorship, Round Robin, an Integer Linear Program, and Yankee Swap.    

\begin{description}[leftmargin=0cm]
\item[Serial Dictatorship \cite{hatfield2009serialdictatorship}:] the course allocation algorithm used by \UMass for student enrollment uses a variant of Serial Dictatorship (SD). 
The allocation system opens to students in order of seniority. PhD students enter first, followed by MS students; next, the system opens for undergraduate enrollment in decreasing order of seniority. 
The first student to access the platform enrolls in all their desired courses. 
Subsequently, the following students enroll in all desired courses that have available seats left. 
We emulate this process by letting students pick in order of their academic status, and in random order within members of the same academic cohort.
We assume that students pick clean bundles, i.e., that students do not enroll in classes that conflict with one another, or that they do not want to take.
This assumption is reasonable: the actual course enrollment system at \UMass does not allow students to pick classes with conflicting schedules or enroll in more classes than their current status allows (usually no more than six classes, or 19 academic credits per semester). 
Students can enroll in classes they do not want to take; however, since SD is strategyproof, students have no incentive to do so. 
We do mention that students are often uncertain about what classes they actually want to take, and tend to enroll in more classes than they intend on keeping (subsequently dropping out during the grace period early in the semester). 
Thus, assuming that students only select clean bundles makes our version of SD appear more fair than it potentially is in practice: students who access the platform early get a wide range of available classes and `hoard' classes; those who log in later face a considerable disadvantage in securing desired classes. These issues with SD are known from prior works on course allocation, e.g., \citet{budish2011ef1,budish2012roundrobincourseallocation}. 
\item[Round Robin:] the \textit{Round Robin algorithm} (RR) \citep{brandt2016handbookfairdiv} (also known as the draft mechanism \cite{budish2012roundrobincourseallocation}), operates in rounds. In each round, students are assigned one available seat that they like, i.e., one that offers them a marginal gain of $1$. This continues until no such seats are available. The order in which students select their seats is fixed. We again prioritize students according to academic status.
\item[Integer Linear Program:] allocation by Integer Linear Program (ILP) finds an allocation $X$ that maximizes the utilitarian social welfare subject to course capacity constraints, and linear inequality constraints that encode student preferences (see \Cref{subsec:valuation_functions}). Since both the objective and constraints are linear, the ILP returns an optimal integer allocation with respect to \USW. 

\item[Yankee Swap:] in the \textit{Yankee Swap algorithm} (YS) \citep{viswanathan2023yankeeswap,viswanathan2023general} agents sequentially pick items. 
In each round, YS selects the lowest utility student and lets them pick a seat that offers them a marginal benefit of $1$. If no such seat is available, they may steal such a seat from another student, as long as that student can recover their utility by either taking an unassigned seat or stealing a seat from yet another student. This goes on until the last student takes an unassigned seat. If no such \emph{transfer path} exists, the student is not elected again. 
The algorithm terminates when no student can further benefit from enrolling in courses with available seats, or when students are solely interested in stealing seats from students who cannot recover their utility. 
When agents have binary submodular valuations, YS offers several theoretical guarantees: it outputs a leximin, EF-X allocation, which also maximizes \USW.
In addition, YS is \emph{truthful}: no agent can increase their utility by misreporting their preferences, e.g., falsely state that they want to enroll in an undesirable class, or say that they do not want to enroll in a desirable class. 

The main computational hurdle in implementing YS is computing the \emph{item exchange graph} (see \Cref{sec:yswithmult} for more details): this is a graph over item types, where there is a directed edge from item $g$ to item $g'$ if the agent who owns item $g$ can retain their current utility by exchanging $g$ for $g'$. This graph is used to compute the transfer paths utilized in YS, and is the most computationally expensive aspect of its runtime. \Cref{sec:yswithmult} describes how we exploit the underlying problem structure in order to derive additional improvements to the YS. 

Like serial dictatorship and Round Robin, Yankee Swap employs a tie breaking scheme to determine which student gets to pick/steal a seat if there is more than one student with the lowest utility. We similarly break ties in favor of students of higher academic status, and use a deterministic (randomly selected) tie-breaking scheme within cohorts.  
\end{description}

\section{Adapting the Yankee Swap Mechanism in the Course Allocation Domain}
\label{sec:yswithmult}

The main computational overhead of the Yankee Swap algorithm \cite{viswanathan2023yankeeswap} stems from maintaining the \emph{exchange graph} $\mathcal{G}$. 
Given an allocation $X$, $\mathcal{G}(X)$ is a directed graph over the set of items $G$. 
There is an edge from an item $g\in X_i$ to another item $g'\in G$ if $v_i(X_i)=v_i(X_i -\mathbbm{1}_g +\mathbbm{1}_{g'}) $. In other words, there is an edge from $g$ to $g'$ if the agent who owns $g$ under $X$ can replace it with $g'$ without reducing their utility. 
If no agent owns item $g\in G$ (i.e., $g\in X_0$), then $g$ has no outgoing edges.

The Yankee Swap algorithm works as follows. Initially, all agents are assigned empty bundles and all items are unassigned, i.e., the initial allocation $X$ is such that $X_{0,g}=1$ and $X_{i,g}=0$ for all item types $g\in G$ and agents $i\in N$. 
At each round, we pick an agent $i$ with the lowest utility, breaking ties arbitrarily. 
Let $F_i(X)=\{g\in G: \Delta_i(X_i,g)=1\}$ be the set of items that offer agent $i$ a marginal benefit of $1$ given their current bundle. 
We search for a shortest path from $F_i(X)$ to items in $X_0$ in the resulting graph and execute it. Note the execution of  the path $(g_0,g_1,\dots,g_p)$ corresponds to agent $i$ taking item $g_0$, and the agent who owns item $g_0$ replaces it with item $g_{1}$, and so on, until we reach the unassigned item $g_p$, which is now assigned to the last agent in the path. This is referred to as \emph{path augmentation}. 
If no such path exists, agent $i$ is kicked out of the game. 
We repeat this process until there are no agents left playing or all items have been allocated. 

The original problem formulation assumes no item multiplicity. When items have multiple identical copies, considering each item copy individually becomes highly inefficient, as the problem size scales by the number of course \emph{seats} rather than the number of courses, which is typically significantly smaller. 

\subsection{Accounting for Item multiplicity}
\label{subsec:item_mult}
We propose a modified version of the Yankee Swap algorithm that allows items to have multiple identical copies, i.e., $q(g)> 1$. 
Since items have multiple copies, they might have multiple owners, which requires a redefinition of the exchange graph. The exchange graph $\mathcal{G}$ is now a directed graph over the set of item types $G$, in which there is an edge of the form $(g, g')$ if there \emph{exists} some agent $i$ who owns a copy of $g$, and is willing to exchange it for a copy of $g'$. 

Since items can have multiple owners, an edge $(g, g')$ in the exchange graph may result from one or more agents willing to make the exchange, thus, we need to maintain a list of agents willing to make the exchange. 
We define a \emph{responsible agents tuple} $R$ consisting of $m^2$ lists of agents that keep track of the agents responsible for each of the edges. Here, $R_{g\to g'}$ is the set of agents who own a copy of $g\in G$ and are willing to exchange it for a copy of $g'\in G$:
$$R_{g\to g'}=\{i \in N: X_{i,g}>0 \wedge v_i(X_i)=v_i(X_i -\mathbbm{1}_{g}+\mathbbm{1}_{g'}) \}.$$ 
$R_{g\to g'}=\emptyset$ if and only if there is no edge $(g, g')$  in $\mathcal{G}$: no agent who owns a copy of $g$ wants to exchange it for a copy of $g'$. 
Initially, $R_{g \to g'}=\emptyset$ for every $g, g' \in G$ since no items have been allocated. 
Once a path transfer is executed, every agent who had their bundle changed in the process might change the $R_{g\to g'}$ values. 
If done naively, these checks can be time-consuming: theoretically one needs to check whether any item in agent $i$'s bundle can be exchanged for any other item. 
To avoid this overhead, we observe that the set of classes that agent $i$ can potentially want to exchange their classes for is a subset of the set of classes they approve.
Let $D_i=\{g\in G| v_i(\mathbbm{1}_g)=1\}$ be the set of agent $i$'s approved classes. 
If agents have binary submodular valuations over the items, then for any allocation $X$, $F_i(X)\subseteq D_i$. In particular, $F_i(X)=D_i$ when $X$ is the initial empty allocation. 

Our implementation of YS (see \Cref{alg:ys}) incorporates key adjustments to account for item multiplicity and efficiency. 
Initially all items copies are unassigned, i.e., $X_{0,g}=q(g)$ for all $g\in G$. 
Second, and more importantly, we run path augmentations over item copies while keeping track of item owners. 
Third, instead of reconstructing the exchange graph after each iteration, we incrementally update it, resulting in a more efficient execution.

Path augmentation occurs in two distinct phases. In the first phase, we identify the agents involved in the transfer path and update the allocation matrix $X$. For instance, if agent $j$ exchanges item $g$ for $g'$, we update the matrix to reflect that the agent now possesses $g'$ and no longer has $g$. 
Next, we update the exchange graph $\mathcal{G}(X)$ and the tuple $R$ to account for items that agents along the path \emph{lose}. 
Specifically, if agent $j$ loses the item $g$ then for any item $h$ approved by agent $j$, $h \in D_j$, we remove agent $j$ from $R_{g \to h}$. 
If $R_{g \to h}$ becomes empty, we remove the edge $(g, h)$ from $\mathcal{G}$. 
In the second phase, we update the exchange graph and the matrix $R$ based on items that agents along the path \emph{receive}. 
For each agent $j$ identified in the previous phase, we check their updated bundle and determine whether they are willing to exchange any item in their bundle $X_j$ for items in their desired set $D_j$. We then adjust the tuple $R$ and update the edges in the exchange graph accordingly. 
Since $D_j$ is the set of classes that each student intrinsically approves, it remains invariant throughout the run of Yankee Swap, and is relatively easy to maintain.

\begin{algorithm}
\caption{Yankee Swap with Item Multiplicity}\label{alg:ys}
\begin{algorithmic}[1]
\Require{A set of item types $G$ with $\vec q =(q(g_1),...,q(g_m))$ copies each, and a set of agents $N$ with binary submodular valuations $\{v_i\}_{i\in N}$ over the items}
\Ensure{A clean leximin allocation}
\State $X = (X_0,X_1,...,X_n) \gets (\vec q, \vec 0, ..., \vec 0)$ 
\State $\R_{g\to g'}\gets \emptyset$ for all $g\in G, g' \in G$
\State Build graph $\mathcal{G}$ with one node per item type $g\in G$
\State $U \gets N$
\While{$U \neq \emptyset$ } 
 \State  Let $i\in \arg\min_{j\in U} v_j(X_j) $ 
\State Check if there is a path $P=(g_0,...,g_{p})$ where $g_0\in F_i(X)$ and $g_{p}\in X_0$ 
\If{a path exists} 
\State{\texttt{//Path Augmentation Phase 1}}
\State $I \gets \{i\}$, $X_i\gets X_i +\mathbbm{1}_{g_0}$
\For{each pair $(g, g')$ in $P$} 
\State Let $j \in R_{g\to g'}$
\State $I \gets I\cup\{j\}$, $X_{j}\gets X_j -\mathbbm{1}_{g}+\mathbbm{1}_{g'}$
\For{each item type $h \in D_{j}$  } 
\If {$j\in R_{g\to  {h}}$ and $X_{j,g}=0$} 
\State $R_{g\to {h}}\gets R_{g\to {h}} \setminus \{j\}$
\If{$R_{g\to h}=\emptyset$}
\State Remove edge $(g, h)$ from $\mathcal{G}$
\EndIf
\EndIf
\EndFor
\EndFor
\State $X_0\gets X_{0} -\mathbbm{1}_{g_p}$
\State{\texttt{//Path Augmentation Phase 2}}
\For{each agent $j \in I$ }
\For{each item type $h \in X_{j}$ and item type $h' \in D_{j}$ 
} 
\If{$j\in R_{h\to h'}$}  
\If{$v_j(X_j)> v_j(X_j-\mathbbm{1}_{h}+\mathbbm{1}_{h'})$} 
\State $R_{{h}\to {h'}}\gets R_{{h}\to {h'}} \setminus \{j\}$
\If{$R_{{h}\to {h'}}=\emptyset $}
\State Remove edge $(h, h')$ from $\mathcal{G}$
\EndIf
\EndIf
\Else{}
\If{$v_i(X_i)\leq v_i(X_i-\mathbbm{1}_{h}+\mathbbm{1}_{h'})$} 
\State $R_{{h}\to {h'}}\gets R_{{h}\to {h'}} \cup \{j\}$ 
\If{$|R_{h\to h'}|=1$}
\State Add edge $(h, h')$ to $\mathcal{G}$
\EndIf
\EndIf
\EndIf
\EndFor
\EndFor
\Else{} 
    \State $U \gets U \setminus \{i\}$ 
\EndIf
\EndWhile
\State \Return $X$
\end{algorithmic}
\end{algorithm}

\subsection{Analysis}
\label{subsec:runtime}
Let $q_\text{total}=\sum_{g\in G}q(g)$ be the total number of item copies. 
A naive implementation of Yankee Swap searches for a shortest path in the exchange graph in time quadratic in $q_{\text{total}}$, which is significantly larger than the number of item types $m$. We reduce the runtime dependency to be on $m$ when searching for transfer paths. 
Second, it turns out that calls to student valuations can be expensive; thus, our goal is to reduce the number of student valuation calls. 

We assume that each agent is limited to a maximum clean bundle size of $c_{\max}$, i.e., $v_i(X_i)\leq c_{\max}$ for all $i\in N$ and any possible bundle $X_i$. 
Similarly, we assume that each agent can get a positive marginal utility of at most $d_{\max}$ items: $|D_i|\leq d_{\max}$ for any $i\in N$. 
Finally, we define $p_{\max}$ as the maximum length of any transfer path. 

We now analyze the time complexity of Algorithm \ref{alg:ys}. Since the allocation $X$ is a matrix, updating an agent's bundle can be done in $O(1)$ time. 
In each iteration, the algorithm either allocates an unassigned item or removes an agent from the game. 
Thus, the outer loop of the algorithm runs at most $q_\text{total}+n$ times. 
Identifying the unhappiest agent can be done in $O(\log n)$ time and, given that there are $O(m)$ nodes in the exchange graph $\mathcal{G}(X)$, we can find the shortest path in $O(m^2)$ time. 
Path augmentation is the most expensive part of the algorithm. Let $q_{\max}=\max_{g\in G} q(g)$ be the largest number of copies of any individual item. 
Since $R_{g\to g'}$ can have at most $q(g)$ elements, determining whether an agent is willing to exchange $g$ for $g'$ takes $O(\log q_{\max})$ time. 
Hence, phase 1 of the path augmentation can be computed in $O(p_{\max} d_{\max} \log q_{\max})$ time, and phase 2 in $O(p_{\max}d_{\max}c_{\max}(\log q_{\max}+\tau))$ time. 
Combining these observations yields following result. 
\begin{theorem}
    Algorithm \ref{alg:ys} runs in $O\left((q_\text{total}+n)(\log n + m^2 + p_{\max}d_{\max}c_{\max}(\log q_{\max}+\tau)\right)$
\end{theorem}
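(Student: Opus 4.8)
The plan is to bound the total runtime as the number of outer-loop iterations times the worst-case cost of a single iteration, reusing the per-phase costs already derived in the preceding discussion. First I would show that the main loop of \Cref{alg:ys} executes at most $q_{\text{total}} + n$ times. The key observation is that every iteration falls into exactly one of two cases: either a transfer path is found and augmented, or no path exists and the selected agent $i$ is removed from $U$. In the first case, augmenting the path removes one copy of the terminal item $g_p$ from $X_0$, so the number of unassigned copies $\sum_{g} X_{0,g}$ strictly decreases by one; crucially, path augmentation only reshuffles ownership of already-assigned items and never returns a copy to $X_0$, so this quantity is monotonically non-increasing over the whole run. Since it starts at $q_{\text{total}}$ and stays non-negative, there are at most $q_{\text{total}}$ successful augmentations. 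In the second case one agent leaves $U$, and since $|U| = n$ initially and agents are never re-added, there are at most $n$ such iterations; summing gives the $q_{\text{total}} + n$ bound.

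Next I would account for the cost incurred within a single iteration, reading it off from \Cref{alg:ys} phase by phase. Selecting a lowest-utility agent costs $O(\log n)$ if utilities are stored in a priority queue. Since the exchange graph $\mathcal{G}(X)$ has exactly $m$ nodes and hence $O(m^2)$ edges, a breadth-first search for a shortest path from $F_i(X)$ to $X_0$ costs $O(m^2)$. For path augmentation I would invoke two structural facts: each list $R_{g\to g'}$ contains at most $q(g) \le q_{\max}$ agents, so membership, insertion, and deletion in a balanced search tree cost $O(\log q_{\max})$; and because bundles remain clean with value bounded by $c_{\max}$, every bundle $X_j$ touched along the path satisfies $|X_j| = v_j(X_j) \le c_{\max}$, while $|D_j| \le d_{\max}$. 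Phase 1 iterates over the at most $p_{\max}$ edges of the path and, for each, over the at most $d_{\max}$ items of $D_j$, performing one $O(\log q_{\max})$ update apiece, for a cost of $O(p_{\max} d_{\max} \log q_{\max})$. Phase 2 iterates over the at most $p_{\max}+1$ agents on the path and, for each, over the at most $c_{\max} d_{\max}$ pairs $(h,h') \in X_j \times D_j$, performing a membership check together with a marginal-valuation query of cost $\tau$, for a cost of $O(p_{\max} d_{\max} c_{\max}(\log q_{\max} + \tau))$.

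To finish I would note that the phase-2 term dominates the phase-1 term (as $c_{\max} \ge 1$), so the per-iteration cost collapses to $O(\log n + m^2 + p_{\max} d_{\max} c_{\max}(\log q_{\max} + \tau))$. Multiplying by the $O(q_{\text{total}} + n)$ iteration bound then yields exactly the claimed expression.

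I expect the main obstacle to lie in the iteration-count argument rather than the per-step accounting. One must verify that the unassigned-copy count is a genuine monotone progress measure, i.e., that no step of path augmentation ever creates a fresh unassigned copy, and that the clean-bundle invariant $|X_j| = v_j(X_j) \le c_{\max}$ is maintained \emph{throughout} execution and not merely at termination, since the $c_{\max}$ factor in the phase-2 bound depends on it. Both facts follow from the binary-submodular structure together with the marginal-gain-one selection and utility-preserving exchange rules, but they are precisely what makes the counting bounds tight, so they warrant explicit justification.
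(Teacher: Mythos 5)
Your proposal is correct and follows essentially the same argument as the paper: bound the outer loop by $q_\text{total}+n$ (each iteration either allocates an unassigned copy or permanently removes an agent from $U$), then charge each iteration $O(\log n)$ for agent selection, $O(m^2)$ for the shortest-path search on the $m$-node exchange graph, $O(p_{\max} d_{\max} \log q_{\max})$ for phase 1, and $O(p_{\max} d_{\max} c_{\max}(\log q_{\max}+\tau))$ for phase 2, and multiply. Your additional care with the monotone unassigned-copy measure and the clean-bundle invariant $|X_j| = v_j(X_j) \le c_{\max}$ makes explicit two facts the paper leaves implicit, but the decomposition and all the per-phase bounds are identical to the paper's.
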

In the context of course allocation, $c_{\max}=6$ is reasonable, since this is the maximal number of classes students are allowed to take in a single semester at \UMass. 
Students typically assign a high score to significantly fewer classes than $m$, thus $d_{\max}$ is also $\ll m$. 
While transfer paths could potentially be of length $m$, they are far shorter in practice, often consisting of no more than two or three item swaps (see \Cref{appendix:transfer_paths}). 
Finally, separating the exchange graph representation from the agents through the tuple $R$ considerably reduces the number of student valuation function calls.
\section{Implementation, Experiments and Results}
\label{sec:experiments}
We implement a general suite of tools for the fair allocation of indivisible items. 
We allow for item multiplicities and allow general constraint-based agent valuations. 
Our implementation includes algorithms for the mechanisms discussed in \Cref{sec:allocation_algs}: Serial Dictatorship (SD), Round Robin (RR), Yankee Swap (YS), and an Integer Linear Program (ILP). 
In the context of course allocation, we model items as courses with features such as catalog number, section, and meeting time, and agents as students with binary valuation functions. 
The full set of courses, referred to as the \emph{schedule}, is derived from the Fall 2024 course offerings of the Computer Science department at \UMass. It includes 96 courses, each with real section details, enrollment capacities, and meeting times. 

We implement the justice criteria described in \Cref{sec:justice_criteria}, to evaluate the allocations. 

We describe the simulation of real and synthetic students, the practical implementation of the justice criteria, followed by the execution of allocation algorithms for two scenarios. First, we run the algorithms on a scaled instance using only real student responses. Next, we consider a full-scale instance, combining real respondents with synthetic students and using real course capacities. 
Finally, we discuss the results and key findings from these experiments.
\subsection{Simulated Students}
We model \emph{real students} based on survey responses and generate \emph{synthetic students} using the method outlined in \Cref{sec:synth_students}. 
From the survey responses regarding the maximum number of courses students wish to enroll in, along with corresponding simulated values, we define the maximum enrollment capacity for each student. We take the minimum between this value and the capacity allowed by the department, which is 6 for undergraduate students and 4 for gradute students. 

Both real and synthetic students express preferences for each course as numerical values ranging from 1 to 8. However, since the allocation algorithms require binary preferences, we map these cardinal preferences to approval preferences by constructing a set of preferred courses using a top-$k$ approach: for each student, we determine a preference threshold based on their $k$-th preference and the corresponding set of preferred courses. 
\begin{example}
Suppose that students approve of their top $2$ ranked classes, i.e., $k =2$. 
The translation from survey responses to approval votes is as follows: 
\begin{center}
    \begin{tabularx}{0.4\textwidth}{cXXXX}
\toprule
    class: & A & B &C &D\\
    \toprule
    Student 1: &6 & 3 &2 & 3\\
    \midrule
    Student 2: & 3 & 1 & 1 & 1\\
    \midrule
    Student 3: & 5 & 5 & 5 & 5 \\
    \bottomrule
    \end{tabularx}
    {\Huge{$\Rightarrow$}}
        \begin{tabularx}{0.4\textwidth}{cXXXX}
\toprule
    class: & A & B &C &D\\
    \toprule
    Student 1: &1 & 1 &0 & 1\\
    \midrule
    Student 2: & 1 & 0 & 0 & 0\\
    \midrule
    Student 3: & 1 & 1 & 1 & 1\\
    \bottomrule
    \end{tabularx}
\end{center}
Since $3$ is the 2nd highest value reported by student 1, classes $A$ $B$ and $D$ are approved.
Student 2 only ranks one class at a value greater than $1$, so that class is the only one approved. 
Student 3 ranked all classes at the same score, and rated them above 1. Thus, they approve of all classes. 
\end{example}
Translating students' numerical scores to approval votes via a top-$k$ approach eliminates any inherent bias that students might exhibit in reporting their scores: being more enthusiastic about classes does not offer an advantage. In addition, it avoids breaking ties in an arbitrary fashion: all classes that receive the same rating receive the same approval score.  
\begin{figure}[t]
        \centering        
        \includegraphics[width=\textwidth]{./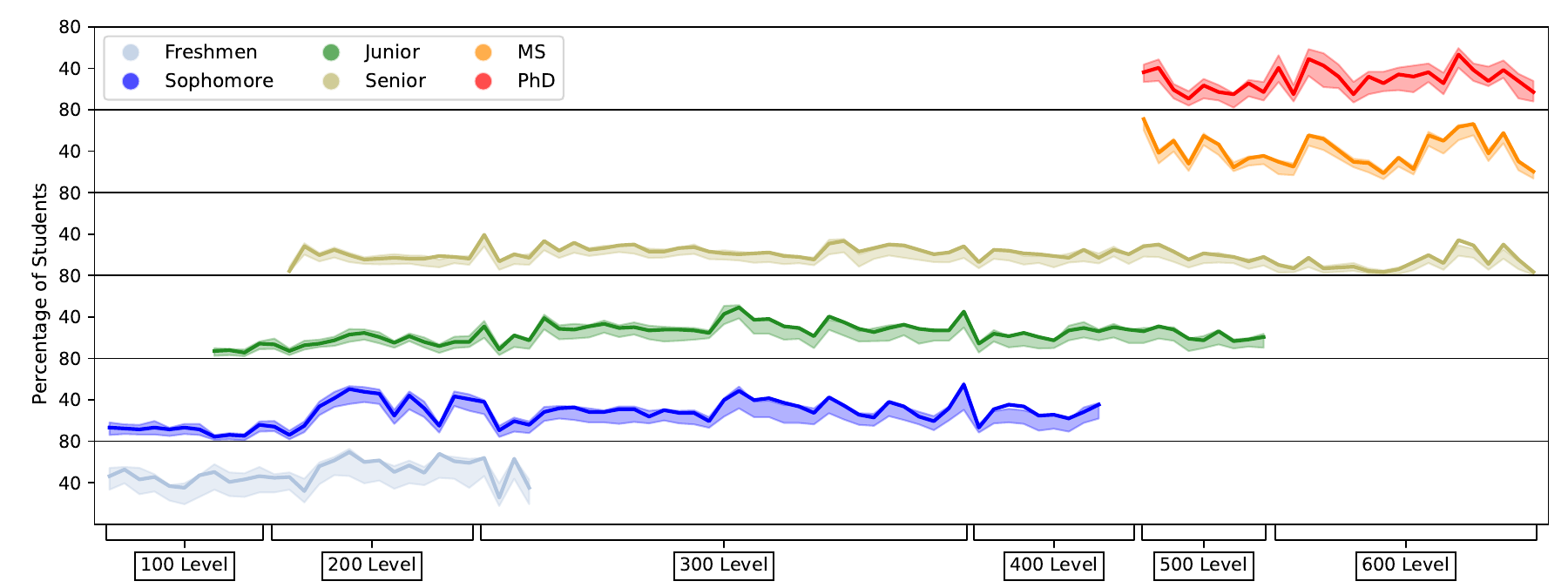}
        \caption{Percentage of students who include each course in their set of preferred courses, grouped by student status. Thick colored lines represent real student data, while the shaded regions indicate the range of percentages observed across 100 synthetic student simulations.}
        \label{fig:preferences}
        \Description{Results on larger student cohort.}
\end{figure}
Synthetic students and real students exhibit similar approval preferences. 
\Cref{fig:preferences} lists the percentage of real and synthetic students, grouped by status, who include each course in their set of preferred courses, considering $k=10$. 
Real students are given by all real responses, and we augment them with synthetic students until reaching the original quantities per status shown in \Cref{table:student_quantities}. The figure considers 100 different seeds for randomly generating the synthetic students.
While synthetic students tend to slightly underestimate preferences, they follow similar percentage trends as real students, providing a reasonable approximation for course demand.


\subsection{Evaluation metrics}
We evaluate how each of the benchmarks performs according to the justice criteria described in \Cref{sec:justice_criteria}.
Given an allocation $X = (X_0;X_1,\dots,X_n)$, we compute the following metrics.
\begin{description}[leftmargin = 0cm]
\item[Welfare Metrics:] we compute the utilitarian welfare $\USW(X)$ and the Nash welfare $\NSW(X)$. We present utilitarian welfare as the percentage of total seats assigned, while Nash welfare is normalized by taking the $n$-th root, i.e., $\sqrt[n]{\prod_{i\in N_{>0}(X)}v_i(X_i)}$, since the true Nash welfare values are galactic numbers. 
Since the Nash welfare takes the geometric average of the positive student utilities, we complete the picture by reporting the number of students who receive an empty bundle, i.e., students who ended up receiving no classes that they like. 
\item[Envy Metrics:] we count the number of times an agent envies another, i.e., 
$$\ENVY(X) = \sum_{i = 1}^n\sum_{j = 1}^n \mathbbm{1}[v_i(X_i)<v_i(X_j)].$$
In addition, we count the number of times an EF-1 violation occurs, i.e., the number of times that an agent envies another even after any item is removed from the envied agent's bundle. 
$$\ENVY\text{-}1(X) = \sum_{i = 1}^n\sum_{j = 1}^n \mathbbm{1}[\forall g \in X_j:v_i(X_i)<v_i(X_j-\mathbbm{1}_g)].$$
\item[Pairwise Maximin Share:] as discussed in \Cref{sec:justice_criteria}, the maximin share guarantee is an inappropriate measure in the course allocation domain, as most students have an MMS value of $0$. Thus, we count the number of times that the PMMS guarantee is violated under the allocation $X$: 
\newcommand{\PMMS}{\texttt{PMMS}\xspace}
$$\PMMS(X) = \sum_{i= 1}^n \sum_{j = 1}^n \mathbbm{1}[v_i(X_i) < \max_{T \preceq X_i + X_j}\min\{v_i(T),v_i(X_i + X_j - T)]$$
\end{description}
The algorithms we use offer some justice guarantees. The ILP maximizes utilitarian welfare by definition, the Yankee Swap framework outputs a leximin allocation \cite{viswanathan2023yankeeswap}, which is guaranteed to maximize utilitarian and Nash welfare \cite{babaioff2021EF} when agents have binary submodular utilities. 
In addition, Yankee Swap outputs EF-X allocations \cite{caragiannis2019unreasonable,plaut2020efx}: if agent $i$ envies agent $j$, then the removal of any item from $j$'s bundle eliminates envy. These allocations are also guaranteed to satisfy PMMS when agents have binary submodular utilities. The Round Robin mechanism is EF-1 when agents have additive utilities \cite{brandt2016handbookfairdiv}, but this guarantee does not necessarily hold when agents have non-additive utilities; as a result, the Round Robin mechanism does exhibit some EF-1 violations in our evaluation. 

\subsection{Experiments}
\begin{figure}[t]
        \centering  
        \includegraphics[width=\textwidth]{./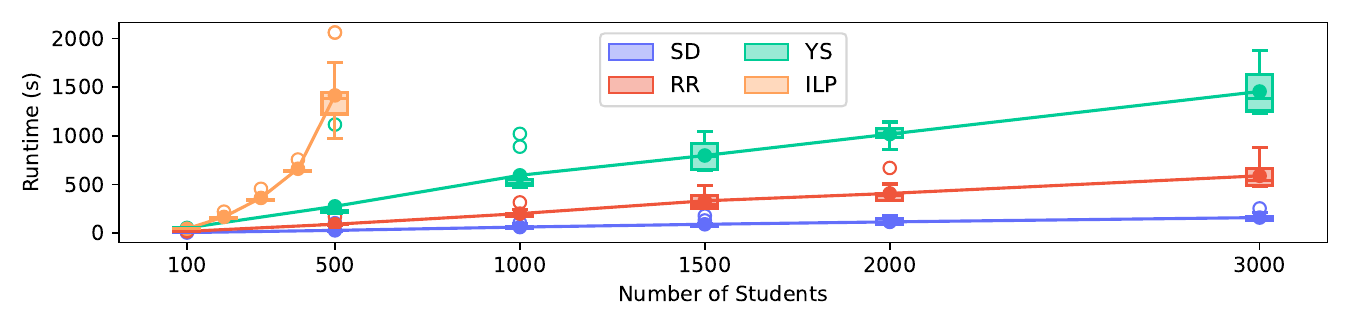}    
        \caption{Runtime of the four allocation algorithms as a function of the number of students. Boxplots represent the distribution of runtime values across 10 runs with different random seeds for each instance.}
        \label{fig:runtime}
        \Description{}
\end{figure}
We run our experiments on a server with 8 core Intel(R) Xeon(R) Gold 5222 CPU @ 3.80GHz and 1.5 TB RAM.
We first examine the scalability of each of our benchmarks. 
For each cohort size, we use either only real students responses, or a mixture of real and synthetic students to reach the target cohort size. 
To ensure each instance remains realistic, we adjust the student composition to match the proportional distribution of each academic status in the department as shown in \Cref{table:student_quantities}, and we scale course capacities accordingly.
Once students are generated, all sequential algorithms (SD, RR and YS) utilize the same student tie-breaking scheme: students with a higher academic status are given higher priority, and are ordered uniformly at random within each group. 

\Cref{fig:runtime} shows that sequential approaches such as SD, YS and RR scale well as the number of students grows, whereas the ILP suffers from exponential blowup in its runtime, making it infeasible to evaluate for instances of more than 500 students. 

To compare all four benchmarks, we utilize a reduced instance consisting only of students generated from a subset of real survey responses. 
According to \Cref{table:student_quantities}, seniors have the lowest response rate of 20.42\%. We scale each course capacity to this value, and sample a subset of real responses from each other status to approximately match this rate, obtaining an instance of 471 students. 
We subsample 100 different instances by varying which students are included in the sample. 
We compare \textcolor{blue}{SD (in blue)}, \textcolor{red}{RR (in red)}, \textcolor{blue!30!green}{YS (in green)} and \textcolor{orange}{ILP (in orange)}.
\begin{figure}[t]
        \centering   
        \begin{subfigure}[b]{0.25\textwidth}
        \centering
        \includegraphics[width =\textwidth]{./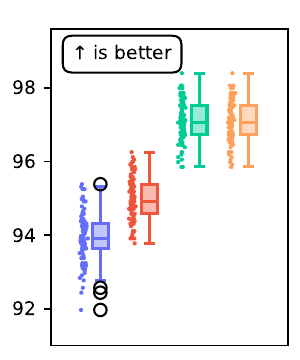}  
        \caption{\% of seats assigned}\label{fig:num-seats-reduced}
        \end{subfigure}
        \begin{subfigure}[b]{0.25\textwidth}
        \centering
        \includegraphics[width = \textwidth]{./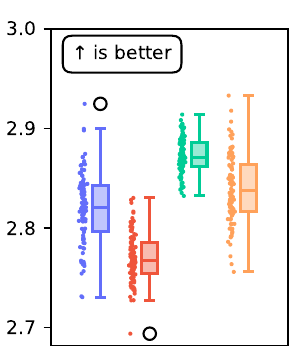}
        \caption{Nash welfare}\label{fig:nash-welfare-reduced}
        \end{subfigure}
        \begin{subfigure}[b]{0.25\textwidth}
        \centering
        \includegraphics[width = \textwidth]{./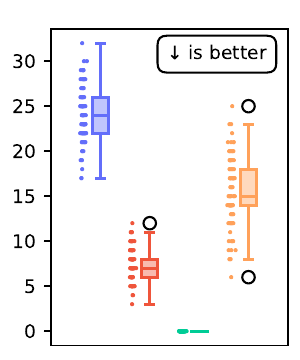}
        \caption{Num. of $0$ util. agents}\label{fig:zeros-reduced}
        \end{subfigure}
        \caption{   
        Welfare metrics for the reduced student instance. \Cref{fig:num-seats-reduced} compares \% of allocated seats (USW), \Cref{fig:nash-welfare-reduced} compares the Nash welfare, and \Cref{fig:zeros-reduced} presents the number of empty bundles.}
        \label{fig:reduced_welfare_metrics}
        \Description{Welfare metrics for reduced students}
\end{figure}

\Cref{fig:reduced_welfare_metrics} summarizes the relative welfare performance of the four algorithmic benchmarks for reduced student instances. Since YS and ILP maximize welfare, they assign the maximal number of seats, whereas RR and SD fail to maximize welfare. YS is theoretically guaranteed to outperform all other benchmarks on Nash welfare (\Cref{fig:nash-welfare-reduced}); the fact that other benchmarks sometimes outperform it is explained by \Cref{fig:zeros-reduced}: YS first minimizes the number of zero utility agents (all students were assigned at least one desirable class in every run of Yankee Swap), and then maximizes the Nash welfare. On the other hand, every other benchmark leaves at least some students with zero utility; SD exhibits the worst performance, leaving 23 students (roughly 5\% of the cohort) with zero utility on average.  
\begin{figure}[t]
        \centering   
        \begin{subfigure}[b]{0.25\textwidth}
        \centering
        \includegraphics[width = \textwidth]{./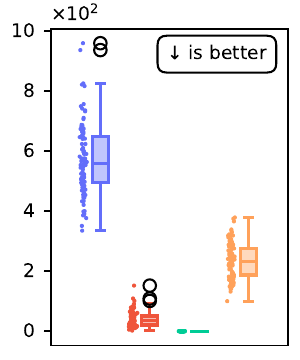} 
        \caption{PMMS violations}\label{fig:PMMS-reduced}
        \end{subfigure}
        \begin{subfigure}[b]{0.25\textwidth}
        \centering
        \includegraphics[width = \textwidth]{./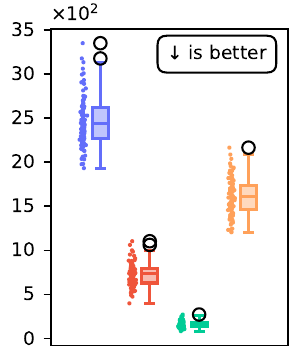}
        \caption{Envy violations}\label{fig:EF-reduced}
        \end{subfigure}
        \begin{subfigure}[b]{0.25\textwidth}
        \centering
        \includegraphics[width = \textwidth]{./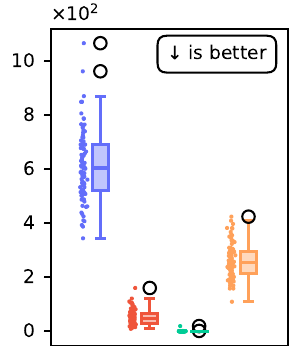}
        \caption{EF-1 violations}\label{fig:EF1-reduced}
        \end{subfigure}
        \caption{   
        Fairness metrics for the reduced student instance. \Cref{fig:PMMS-reduced} shows the number of PMMS violations, \Cref{fig:EF-reduced} shows the number of envious agents, and \Cref{fig:EF1-reduced} shows the number of EF-1 violations.}
        \label{fig:reduced_fairness_metrics}
        \Description{Fairness metrics for reduced students}
\end{figure}
\Cref{fig:reduced_fairness_metrics} summarizes how the four benchmarks fare in terms of fairness. 
Since YS is guaranteed to be EF-1 and offers each agent their PMMS share, it significantly outperforms the other methods in terms of both PMMS and EF-1 violations. In addition, it exhibits significantly fewer envy violations than the other methods.
Perhaps unsurprisingly, the allocations output by the SD mechanism highly favor students who access the system early, resulting in significant envy and PMMS violations. Finally, we note that  
\begin{figure}
    \centering
\includegraphics[width=\textwidth]{./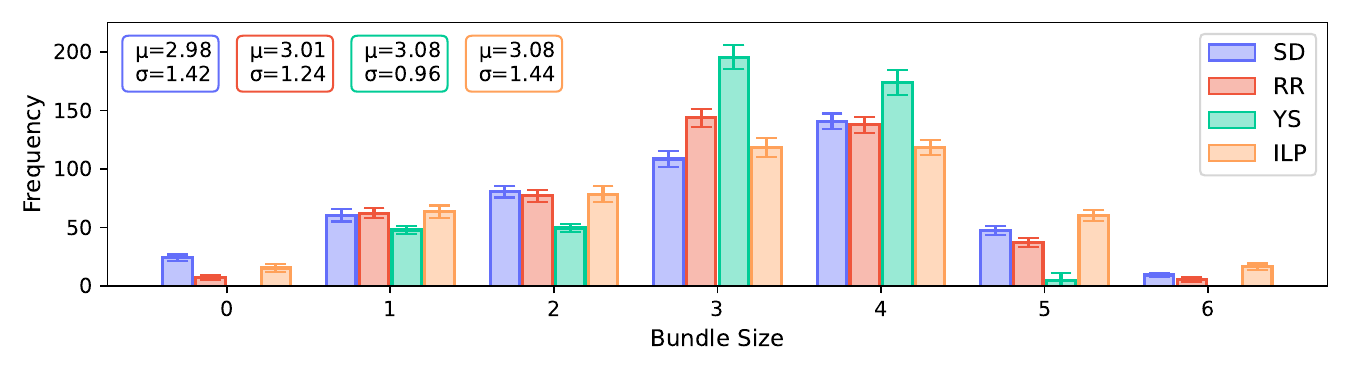}
    \caption{Histogram of the number of seats assigned to students under the four algorithmic benchmarks, for reduced student instances. We also report the mean $\mu$ number of seats assigned, as well as the standard deviation $\sigma$ of the assignment distribution for each allocation mechanism.}
    \label{fig:histogram-reduced}
        \Description{Results on reduced student cohort.}
\end{figure}
the distribution of student utilities (\Cref{fig:histogram-reduced}) shows that the allocations output by YS have ensure significantly more equitable distributions: most students receive bundles of 3 or 4 courses, while none receive 0 or 6 classes.  

We also run a real-scale experiment with the three scalable algorithms: SD, RR, and YS. 
For this, we use all real responses and augment them with synthetic students until reaching the actual quantities per status, resulting in a full-sized instance of 2,308 students (the number of CS majors at \UMass). 
We sample 100 different instances of synthetic students and run the algorithms in the same student order for each instance.

YS, SD and RR exhibit similar behavior on both welfare (\Cref{fig:welfare-metrics}) and fairness (\Cref{fig:fairness-metrics}) metrics. 
SD is the least fair and efficient of the three methods, YS offers the best guarantees, and RR falls somewhere in between.  

\begin{figure}[t]
        \centering
        \begin{subfigure}[b]{0.25\textwidth}
        \centering
        \includegraphics[width =\textwidth]{./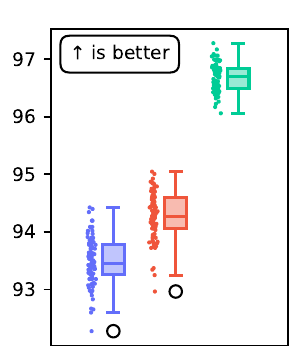}  
        \caption{\% of seats assigned}\label{fig:num-seats}
        \end{subfigure}
        \begin{subfigure}[b]{0.25\textwidth}
        \centering
        \includegraphics[width =\textwidth]{./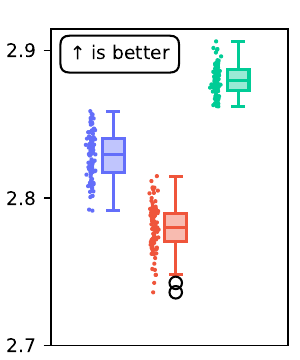}  
        \caption{Nash welfare}\label{fig:nash}
        \end{subfigure}
        \begin{subfigure}[b]{0.25\textwidth}
        \centering
        \includegraphics[width =\textwidth]{./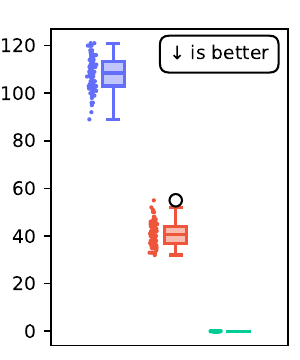}  
        \caption{Num. of $0$ util. agents}\label{fig:zeros}
        \end{subfigure}
        \caption{Welfare metrics for a full-size instance with 2,308 students.}\label{fig:welfare-metrics}
        \Description{Results on larger student cohort.}
\end{figure}
\begin{figure}[t]
        \centering
        \begin{subfigure}[b]{0.25\textwidth}
        \centering
        \includegraphics[width =\textwidth]{./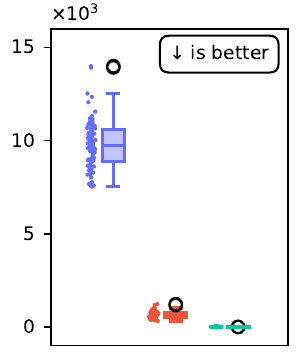}  
        \caption{PMMS violations}\label{fig:PMMS}
        \end{subfigure}
        \begin{subfigure}[b]{0.25\textwidth}
        \centering
        \includegraphics[width =\textwidth]{./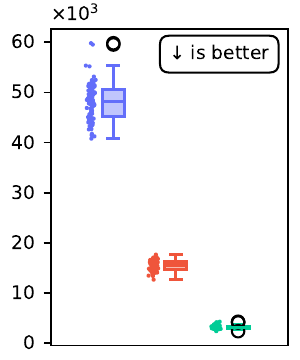}  
        \caption{EF violations}\label{fig:EF}
        \end{subfigure}
        \begin{subfigure}[b]{0.25\textwidth}
        \centering
        \includegraphics[width =\textwidth]{./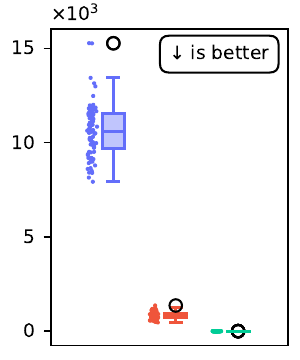}  
        \caption{EF-1 violations}\label{fig:EF1}
        \end{subfigure}
        \caption{Fairness metrics for a full-size instance with 2,308 students}\label{fig:fairness-metrics}
        \Description{Results on larger student cohort.}
\end{figure}
\Cref{fig:hist} shows a histogram of bundle sizes for full-size instances. Again, both the serial dictatorship and the Round Robin mechanisms offer significantly greater disparities as compared to the Yankee Swap framework; indeed, as \Cref{fig:zeros} shows, $\sim 110$ students receive an empty bundle under SD, $\sim40$ receive an empty bundle under RR, and no  student receives an empty bundle under YS. Yankee Swap bundles are again heavily concentrated around 3-4 seats, which indicates a more equitable seat distribution.
\begin{figure}[t]
    \centering
    \includegraphics[width=\textwidth]{./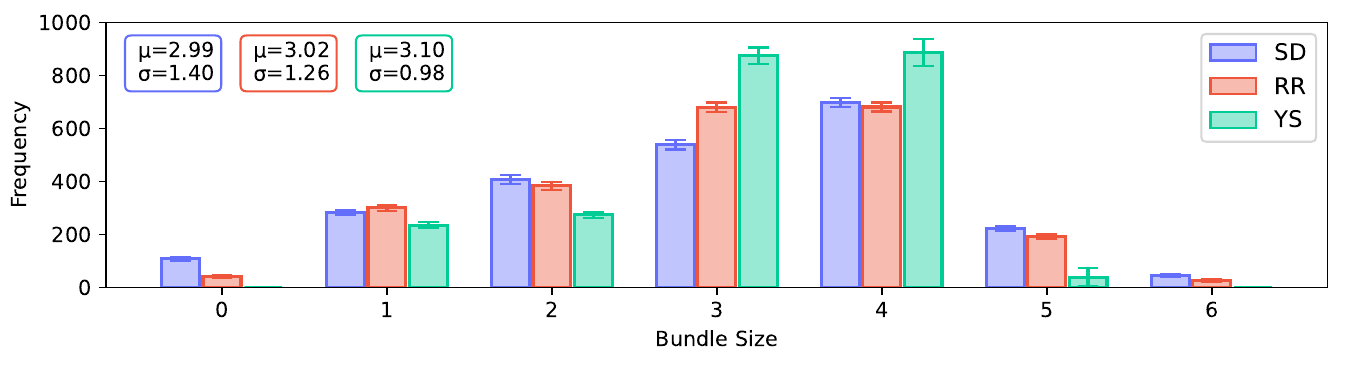}
    \caption{Averaged distribution of bundle sizes in experiments with 2,308 students.}
    \label{fig:hist}
    \Description{Student histogram full size instance}
\end{figure}

\section{Discussion}
\label{sec:discussion}
In this work, we present a publicly available, large-scale framework including detailed student preference data, fair allocation mechanisms, and evaluation metrics. Our evaluation suite is not complete. In particular, we hope to include additional mechanisms in our suite (Course Match is a prime candidate) as well as additional survey data. We also plan to conduct longitudinal studies on the perceived effectiveness of different course allocation mechanisms within \UMass. 
Discussions with decision makers at \UMass identify an interesting advantage of serial dictatorship: students do not need to report their full preferences to the university, and have complete certainty regarding their course schedule when they conclude their interaction with the allocation mechanism. 
Assessing the impact of the inherent uncertainty of sequential allocation on student satisfaction is an important direction for future work.  

Assessing the effect of course \emph{supply and demand} is an important direction for future work. Preliminary analysis (see \Cref{appendix:experiments}) indicates that artificially increasing course demand by introducing additional students results in poorer performance on fairness metrics for all mechanisms; however, since Yankee Swap offers provable fairness guarantees, it results in more equitable outcomes than other benchmarks. For example, as we increase the number of students (while keeping course capacities constant) the number of students who receive no desired classes grows linearly under serial dictatorship. Since Yankee Swap outputs leximin allocations, this effect does not occur.

We would like to examine the effects of student \emph{uncertainty} on their enrollment behavior. Students' survey responses are (likely shaky) estimates of their true satisfaction with courses. 
In practice, students are aware of this, and regularly over-enroll in classes, with the intent of dropping some of them during first two weeks of the semester. 
This creates overheads for both the students and course enrollment systems, which could potentially be resolved with better algorithmic designs.
Pandora's Box problems \cite{beyhaghi2024pandora} model costly exploration under uncertainty, and have been used in other combinatorial optimization problems, e.g., minimum spanning trees \cite{singla2018priceinformation}. 
Applying similar techniques to the fair allocation domain is a promising direction for future work.  

Finally, we would like to encode more complex elicitation mechanisms and their effects on student satisfaction. In particular, explicitly eliciting students' preferences over course bundles (as is done by \citet{budish2017coursematch}), and integrating those preferences in our frameworks is an important direction for future work. The sequential frameworks we utilize need to be adapted in order to account for these changes. In particular, the fact that the underlying preferences explicitly encode synergies may require the integration of more complex sequential approaches, or their integration with optimization based methods.  
\subsection*{Acknowledgments} 
The authors thank the reviewers of the 1st Workshop on Incentives in Acadamia 2024 for their useful feedback on an early version of this work. 
Navarrete D{\'i}az and Zick are supported by NSF grant RI-2327057. 
Cousins was supported by a UMass Center for Data Science (CDS) postdoctoral fellowship. The authors thank David Thibodeau and Vignesh Viswanathan for their comments and feedback on early versions of this work. Finally, the authors would like to thank Christine Holbrook and Ramesh Sitaraman for their support and insights regarding survey design, release and data collection. 
\bibliographystyle{plainnat}

\bibliography{abb,references,george}
\newpage


\appendix

\section{Synthetic Student Details}
\label{sec:synth_details}

In this section we elaborate on the \texttt{mBeta} distribution from modeling, to inference, and finally parameter estimation. Our emphasis is on providing enough detail to justify the implementation we have used in our experiments. In what follows, the \texttt{diag} operator constructs an $n \times n$ matrix with $\bm v$ along its diagonal when $\bm v$ is an $n \times 1$ vector, and it extracts the diagonal of $A$ into an $n \times 1$ vector when it operates on $n \times n$ matrix $A$. By $A \odot B$, we denote the Hadamard (or component-wise) product of matrices $A$ and $B$. The length-$m$ vector of all ones is denoted by $\bm{1}_m$, and the $m \times m$ identity matrix is given by $I_m$.

There are various ways to define a multivariate beta distribution for random student $\vec \vartheta$.  \citet{westphal2019simultaneous} constructs $\mBeta(\vec\gamma)$ from a Dirichlet distribution with concentration parameter $\vec\gamma \in \mathbb{R}^{2^m}$. 
In principle, $\vec\gamma$ can represent arbitrary dependencies between all possible subsets of the $m$ courses: The $\mBeta(\vec\gamma)$ distribution defines a categorical distribution over the power set of courses, and for $\vec p \sim \Dir(\vec\gamma)$, the matrix $H \in \{0, 1\}^{m \times 2^m}$, whose columns comprise all bit vectors of length $m$, maps $\vec p$ to $\vec\vartheta$, where $\vec \vartheta \sim \mBeta(\vec\gamma)$. 
In practice, we limit $\vec\gamma$ to size polynomial in $m$. \citet{westphal2019simultaneous} achieves this end by constraining $\vec\gamma$ to a family of distributions characterized by the mean vector $\E[\vec\vartheta] \in \R^m$ and covariance matrix $\cov(\vec\vartheta) \in \R^{m \times m}$ of $\mBeta(\vec \gamma)$. 

We approximate the preference of random student $\vec \vartheta$ for course $g$ to arbitrary precision via a sequence of  coin flips biased by $\theta_g$.
The entire preference vector $\vec \theta$ is encoded in the data matrix $\mathcal{D} \in \{0, 1\}^{\ell \times m}$, where $\mathcal{D}_{jg}$ represents the $j$th flip for course $g$.
Matrix $\mathcal D$ forms the input to our inference procedure for developing a statistical model of student preferences.
For computational reasons we also use a \emph{vectorized} version of $\mathcal{D}$: define the function $d: \{0,1\}^\ell \times \{0,1\}^m \rightarrow \mathbb{Z}^{2^m}$ to be a map from $\mathcal{D}$ to a vector whose $i$-th entry corresponds to the length-$m$ bit array with integer value $i$ and whose value at that entry is equal to the number of rows in $\mathcal{D}$ that have value $i$ when interpreted as a bit array. 

\begin{table}
\centering
\begin{tabular}{c c c c} 
 \hline
 Variable & Prior & Posterior Update & Description \\ [0.5ex] 
 \hline
 $\nu$ & $\frac{1}{1000}$ & $\nu + \ell$ & Shape parameter \\
  $\vec \mu$ & $\frac{1}{2} \bm{1}_m$ & $\texttt{diag}(A^*) / \nu^*$ & Mean vector \\ 
 $\vartheta_j$ & $\texttt{Beta}(\nu_j, \nu_j)$ & $\texttt{Beta}(\alpha_j, \beta_j)$ & Course preference values \\
 $\mathcal{R}$ & $I_m$ & $(V^*)^{-1/2} \Sigma^* (V^*)^{-1/2}$ & Correlation matrix \\
 $U$ & Proposition~\ref{prop:update_matrix} & --- & Update matrix \\ 
 $A$ & $\nu((\nu+\bm{1}_m) \Sigma + \vec\mu \vec\mu^T)$ & $A + U$ & Moment matrix \\
 $\vec \alpha$ & \multicolumn{2}{c}{$\nu \vec \mu$} & Shape parameter \\
 $\vec \beta$ & \multicolumn{2}{c}{$\nu \bm{1}_m - \vec \alpha$} & Shape parameter \\
 $V$ & \multicolumn{2}{c}{$\texttt{diag}(\vec\mu \odot (\bm{1}_m-\vec\mu)) / (\nu + 1)$} & Standard deviation matrix \\
 $\Sigma$ & \multicolumn{2}{c}{$V^{1/2}\mathcal{R}V^{1/2}$}  & Covariance matrix \\
 \hline
\end{tabular}
\caption{Synthetic student inference parameters. Variables with asterisks indicate posterior values. Expressions spanning both prior and posterior columns are used in both cases, with prior input variables being used in the former and posterior input variables used in the latter.}
\label{tab:inference}
\end{table}

We construct a separate $\mBeta(\vec \gamma)$ distribution for each respondent from their survey responses by using the following inference procedure. 
For respondent $i$, we use the data matrix $\mathcal D^i$ for a Bayesian update of the prior distribution for $\mBeta(\vec \gamma)$. \Cref{tab:inference}
provides a summary of the prior and posterior parameters whose values we derive presently. 

Let $\vec\mu$ denote any prior estimate of $E[\vec\vartheta]$ and assume for the moment that prior parameters $\nu \in \mathbb{R}_+$ and $\mathcal{R} \in (-1,1)^{m \times m}$ are given. Westphal~\cite{westphal2019simultaneous} expresses the covariance $\texttt{cov}(\vec\vartheta)$ by 
\begin{equation}
    \label{eq:sigma}
    \Sigma = V^{1/2}\mathcal{R}V^{1/2},
\end{equation} 
where
\begin{equation}
\label{eq:V}
    V = \texttt{diag}(\vec\mu \odot (\bm{1}_m-\vec\mu)) / (\nu + 1).
\end{equation}
(Note that Equation~\ref{eq:V} is slightly different than what is found in \citet{westphal2019simultaneous} due to a typographical error in the latter.) The \emph{moment matrix} becomes
\begin{equation}
\label{eq:moment_matrix}
    A = \nu((\nu+1) \Sigma + \vec\mu \vec\mu^T).
\end{equation}

It's important to select the priors $\vec\mu$, $\nu$, and $\mathcal{R}$ such that Westphal's moment conditions are met. To that end, we may choose any $\nu \in \mathbb{R}_+$, $\vec\mu = \bm{1}_m/2$, and $\mathcal{R} = I_m$, the $m$-dimensional identity matrix. For the marginal prior with respect to course $g$ we have
\begin{equation*}
    \vartheta_g = \Beta(\nu/2, \nu/2).
\end{equation*}

Posterior updates are $\nu^* = \nu + \ell$ and $A^* = A + U$ where $U = H \Lambda H^T$, $\Lambda = \texttt{diag}(d(\mathcal D))$, $\mathcal D$ is the data matrix, and $d(\mathcal D)$ its vectorization as defined above. Posterior mean vector $\vec\mu^*$ is updated from $A^*$ and $\nu^*$ via the relationship
\begin{equation*}
    \vec\mu^* = \texttt{diag}(A^*) / \nu^*.
\end{equation*}
The posterior $\Sigma^*$ can be derived from Equation~\ref{eq:moment_matrix} after substituting $A^*$, $\vec\mu^*$, and $\nu^*$ for $A$, $\vec\mu$, and $\nu$, respectively. Then $\mathcal{R}^*$ can be derived from Equations~\ref{eq:sigma} and~\ref{eq:V} after substituting $\Sigma^*$ and $V^*$ for $\Sigma$ and $V$, respectively. Specifically,
\begin{equation*}
    \mathcal{R}^* = (V^*)^{-1/2} \Sigma^* (V^*)^{-1/2}.
\end{equation*}
Finally, the marginal posterior with respect to course $g$ is given by
\begin{equation}
\label{eq:theta_posterior}
    \vartheta_g = \Beta(\alpha_g, \beta_g),
\end{equation}
where $\vec \alpha = \nu \vec \mu$ and $\vec \beta = \nu \bm{1}_m - \vec \alpha$.

The following proposition is necessary for calculating $U$ without explicitly calculating $H$, the latter of which scales exponentially with the number of classes.

\begin{proposition}
\label{prop:update_matrix}
    Let $d_i$ denote the $i$th entry of vector $d(\mathcal D)$, $\Lambda = \texttt{diag}(d(\mathcal D))$, and take $\vec y_i$ to be the $i$th column of $H$. Then we have
    \begin{equation*}
        U = \sum_{i = 1}^{2^m} d_i \vec y_i \vec y_i^T.
    \end{equation*}
\end{proposition}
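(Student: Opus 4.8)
The plan is to recognize the claim as the standard outer-product (rank-one) decomposition of a matrix product, specialized to the case where the middle factor is diagonal. Recall from the setup that $U = H\Lambda H^T$, where $H \in \{0,1\}^{m \times 2^m}$ has columns $\vec y_1,\dots,\vec y_{2^m}$ and $\Lambda = \texttt{diag}(d(\mathcal D))$ has diagonal entries $d_1,\dots,d_{2^m}$. The goal is simply to re-express this triple product as a $d_i$-weighted sum of rank-one outer products $\vec y_i \vec y_i^T$, which is exactly the form that lets $U$ be computed without ever materializing the exponentially large matrix $H$.

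First I would observe that right-multiplying $H$ by the diagonal matrix $\Lambda$ merely rescales columns: the $i$th column of $H\Lambda$ is $d_i \vec y_i$. Then I would invoke the column-times-row form of matrix multiplication, namely that for conformable matrices $P$ and $Q$ one has $PQ = \sum_i P_{:,i}\, Q_{i,:}$, where $P_{:,i}$ is the $i$th column of $P$ and $Q_{i,:}$ the $i$th row of $Q$. Taking $P = H\Lambda$ and $Q = H^T$, the $i$th column of $P$ is $d_i \vec y_i$ and the $i$th row of $H^T$ is $\vec y_i^T$, so the product collapses to $\sum_{i=1}^{2^m} d_i \vec y_i \vec y_i^T$, as claimed.

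As an alternative that avoids quoting the outer-product identity, I could verify the equality entrywise. Using $\Lambda_{ik} = d_i \delta_{ik}$, for any coordinates $a,b$ we get $(H\Lambda H^T)_{ab} = \sum_{i,k} H_{ai}\Lambda_{ik} H_{bk} = \sum_i d_i H_{ai} H_{bi} = \sum_i d_i (\vec y_i)_a (\vec y_i)_b$, which is precisely the $(a,b)$ entry of $\sum_i d_i \vec y_i \vec y_i^T$. Either route is a one-line computation.

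There is no genuine obstacle here: the content is purely a matrix-algebra identity, and the only step requiring any care is tracking the diagonal structure of $\Lambda$ so that the double sum over $(i,k)$ correctly reduces to a single sum over $i$. What is worth emphasizing instead is the \emph{purpose} of the reformulation. Each $\vec y_i$ is a fixed bit vector and $d_i$ counts how many rows of $\mathcal D$ equal it, so the right-hand side can be assembled by iterating only over the distinct bit patterns actually observed in the data --- of which there are at most $\ell$ --- rather than over all $2^m$ columns of $H$. This is the practical payoff that motivates stating the proposition in the first place.
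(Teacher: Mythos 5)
Your proof is correct and takes essentially the same approach as the paper: the paper writes $\Lambda = \sum_i d_i E_i$ (with $E_i$ the elementary diagonal matrices) and computes $H E_i H^T = \vec y_i \vec y_i^T$, which is just a differently bookkept version of your column-times-row expansion (and of your entrywise check). In both cases the diagonal structure of $\Lambda$ collapses the triple product $H \Lambda H^T$ into the $d_i$-weighted sum of rank-one outer products, so there is nothing substantive to distinguish the two arguments.
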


\begin{proof}
Let $w = 2^m$ and take $E_i$ to denote the matrix with all zeros except for a 1 in the $i$th diagonal entry. Then it is clear that
\begin{equation*}
    \Lambda = \sum_{i=1}^w d_i E_i,
\end{equation*}
and so
\begin{equation}
\label{eq:inter_U}
    U = \sum_{i=1}^w d_i H E_i H^T. 
\end{equation}
We also have that $H E_i H^T$ is equivalent to
\[
\begin{array}{c}
    [\vec y_1, \ldots, \vec y_w]
    E_i
    \left[
    \begin{array}{c}
         \vec y_1^T  \\
         \vdots \\
         \vec y_w^T
    \end{array}
    \right] =
\end{array}
\]
\[
\begin{array}{c}
    [0, \ldots, 0, \vec y_i, 0, \ldots, 0]
    \left[
    \begin{array}{c}
         \vec y_1^T  \\
         \vdots \\
         \vec y_w^T
    \end{array}
    \right] =
\end{array}
\]
\[
[y_{i1} \vec y_i , \ldots, y_{iw} \vec y_i ]=
\]
\begin{equation}
\label{eq:outer_product}
\vec y_i \vec y_i^T.   
\end{equation}
The result follows by substituting Equation~\ref{eq:outer_product} into Equation~\ref{eq:inter_U}.
\end{proof}

The \mBeta distribution is defined so that the $m$ marginals have a beta distribution. 
Let $C: [0,1]^m \rightarrow [0,1]$ be the Gaussian copula with correlation matrix equal to the posterior correlation matrix $\mathcal{R}$. The Gaussian copula~\cite{sklar1959fonctions} is a multivariate cumulative distribution function that captures the dependency structure of the multivariate Gaussian but has uniform marginals. It allows us to \emph{borrow} the dependency structure of a multivariate Gaussian while separately defining each marginal distribution. We recover the distribution function $F$ for $\mBeta(\vec\gamma)$ as 
\begin{equation*}
\label{eq:mbeta_dist}
    F = C(\tilde{F}_1, \ldots, \tilde{F}_m),
\end{equation*}
where $\tilde{F}_g$ is the distribution function of the marginal posterior distribution for $\vartheta_g$ given by Equation~\ref{eq:theta_posterior}. Given $\mBeta(\vec \gamma)$ (where $\vec \gamma$ is implicitly defined), we produce a synthetic survey response by drawing a synthetic student $\vec \sigma \in [0,1]^m$ according to $\mBeta(\vec \gamma)$, multiplying each of its entries $\sigma_g$ by 8 (the highest possible response value), and rounding it the nearest integer. 

\section{Empirical Transfer Path Length}
\label{appendix:transfer_paths}
As explained in \Cref{sec:yswithmult}, Yankee Swap requires finding and executing transfer paths, the length of which govern the algorithm's runtime (see \Cref{subsec:runtime}). We analyze the lengths of these paths across 100 seeds of the real-scale instance with 2308 students. \Cref{fig:transfer_paths} presents a histogram showing the average frequencies of path lengths, along with their respective error bars. A path length of 0 indicates that no path was found, at which point the agent was removed from the game; a length of 1 means the agent directly received a seat without needing to steal. Paths of length 2 or more represent cases where transfers occurred between students. The histogram reveals that most iterations involve either 0 or 1 students, indicating that transfers are rare. This is particularly surprising given that these very few transfers are responsible for YS outperforming other algorithms in welfare and fairness metrics. Lastly, although there is no theoretical upper bound on transfer path length --- in the worst case, it could involve all $m$ items --- we observe that even across 100 simulations, each averaging 9500 iterations, the longest path recorded is 7 (see \Cref{tab:transfer_path}).

\begin{figure}[h]
        \centering
        \includegraphics[width =\textwidth]{./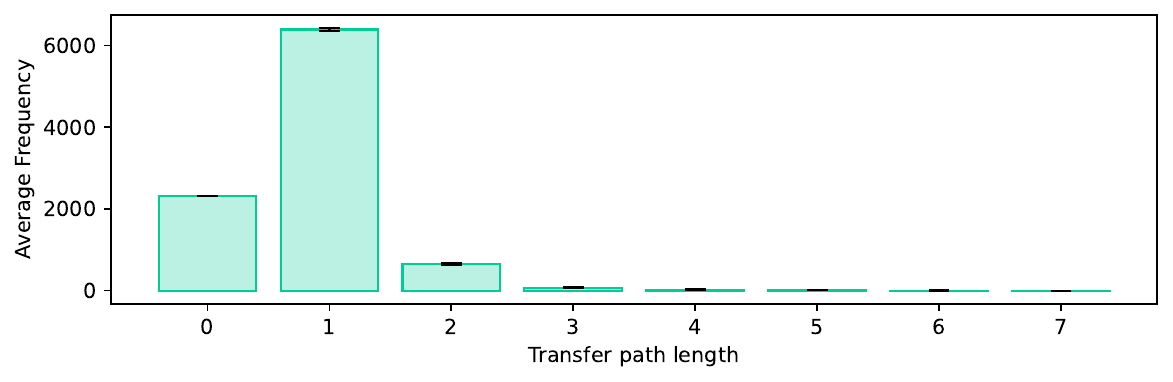}  
        \caption{Transfer path length mean frequency across all Yankee Swap iteration for 100 seeds.}
        \Description{Transfer path length mean frequency}
        \label{fig:transfer_paths}
\end{figure}

\section{Additional Experiments}
\label{appendix:experiments}
How well do different mechanisms respond to increased student demand? 
To evaluate how different allocation mechanisms behave under increased student demand, we conducted 10 simulations per cohort size while maintaining real course capacities. 
This allows us to analyze what happens when we increase the number of students without adding more course seats. 
As shown in \Cref{fig:stress_USW}, all three algorithms perform well in terms of utilitarian welfare, with YS slightly outperforming the others in terms of \USW. 

However, a different trend emerges when considering empty bundles (\Cref{fig:stress_zeroes}). 
YS makes a significant effort to ensure every student receives at least one item, as reflected in the curve’s shape, which suggests that YS prioritizes allocating each student at least one desirable class, until demand completely outstrips supply. Given that there is a total of 7,389 seats, enrolling more than 7,389 students inevitably leads to empty bundles. 
While RR also mitigates empty bundles, SD exhibits a linear increase in empty bundles as system stress rises.

\begin{table}
    \centering
    \small
    \begin{tabularx}{\textwidth}{cXXXXXXXX}
    \toprule
         Transfer path length & 0 &1 & 2 & 3 & 4 & 5 & 6 & 7\\
         \toprule
         Frequency & 115,400 & 319,539 & 32,082 & 3,583 & 1,253 & 580 & 83 & 4\\
         \bottomrule
    \end{tabularx}
    \caption{Frequency of each transfer path length across all Yankee Swap iterations for the full-sized instance, for all 100 seeds.}
    \label{tab:transfer_path}
\end{table}

\begin{figure}[t]
        \centering
        \includegraphics[width =\textwidth]{./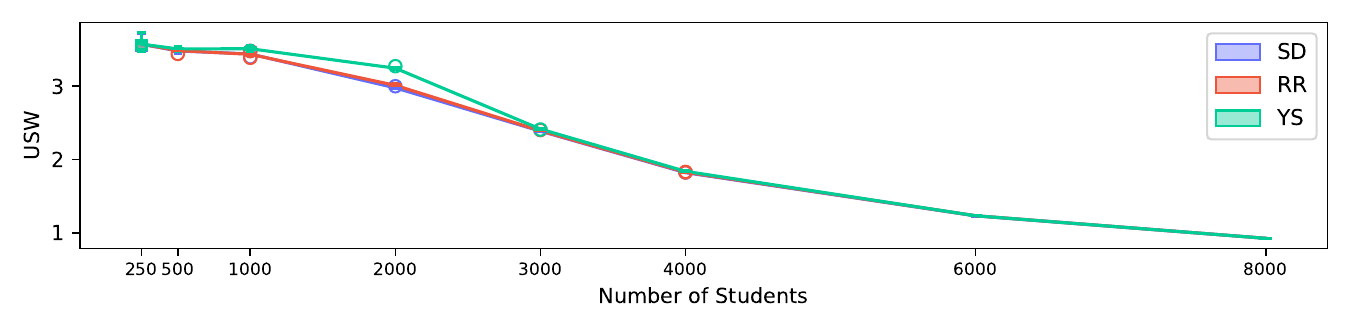}  
        \caption{\USW as a function of the number of students, while maintaining real course capacities.}
        \Description{USW as a function of the number of students.}
        \label{fig:stress_USW}
\end{figure}

\begin{figure}[t]

        \centering
        \includegraphics[width =\textwidth]{./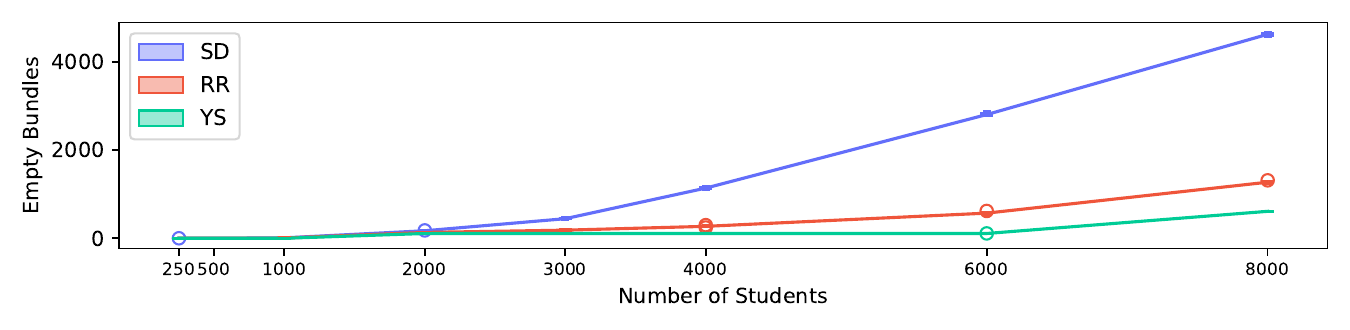}  
        \caption{Number of empty bundles as a function of the number of students, while maintaining real course capacities.}
        \Description{Empty bundles as a function of the number of students.} \label{fig:stress_zeroes}
\end{figure}

\end{document}